\def\be{\begin{eqnarray}}
\def\ee{\end{eqnarray}}
\def\ba{\begin{array}{l}}
\def\ea{\end{array}}
\begin{document}
\title{Separability criteria based on Bloch representation of density matrices}
\author{Hui Zhao$^1$ \and Mei-Ming Zhang $^1$  \and Naihuan Jing$^{2,3}$  \and Zhi-Xi Wang$^4$ }
\institute{ \Letter~~Hui Zhao \at
                     zhaohui@bjut.edu.cn \and
              \Letter~~Mei-Ming Zhang \at
                     603171581@qq.com \and
            \Letter~~    NaiHuan Jing\at
                    ~~~ nathanjing@hotmail.com\and
              \Letter~~     Zhi-Xi Wang\at
                     ~~~wangzhx@cnu.edu.cn
           \and
 \at 1 College of Applied Sciences, Beijing University of Technology, Beijing 100124, China\\
 \at 2 Department of Mathematics, North Carolina State University, Raleigh, NC 27695, USA\\
 \at 3 Department of Mathematics, Shanghai University, Shanghai 200444, China\\
 \at 4 School of Mathematical Sciences, Capital Normal University, Beijing 100048, China
}


\maketitle
\begin{abstract}
\baselineskip18pt
We study separability criteria in multipartite quantum systems of arbitrary dimensions by using the Bloch representation of density matrices. We first derive the norms of the correlation tensors and obtain the necessary conditions for separability under partition of tripartite and four-partite quantum states. Moreover, based on the norms of the correlation tensors,
we obtain the separability criteria by matrix method. Using detailed examples, our results are seen to be able to detect more entangled states than previous studies. Finally, necessary conditions of separability for multipartite systems are given under arbitrary partition.
\keywords{Separability \and Bloch vector \and  Entanglement}
\end{abstract}

\section{Introduction}
Quantum entanglement is a significant feature in quantum physics. It plays an important role in many ways such as quantum teleportation \cite{ea} and cryptography \cite{hbb, grz, shh}. Therefore, determination of a state being entangled or not is a significant issue in quantum information theory. Although many efforts have been devoted to the study of this problem \cite{wcz, cw, gl, hr, hj, fwz, xzw, zzf, zff, yz, zgj}, it is still open except some special cases. Since norms of the Bloch vectors have a close relationship to separability criteria, research on related problem has attracted attention recently.

The norms of correlation tensors in the Bloch representation of quantum states were used to improve the separability criterion \cite{bf, bk}. The norms of correlation tensors for density matrices in lower dimensions were discussed in \cite{js, kg}. In \cite{lww}, the norms of the correlation tensors for quantum state with systems less than or equal to four have been investigated. Some sufficient or necessary conditions of separability by using the norms of the correlation tensors of density matrices were presented in \cite{jid, jiv, asm, lwl}.  The relations between the norms of the correlation tensors and the detection of genuinely multipartite entanglement in tripartite quantum systems have also been established in \cite{ljw}. In \cite{ys}, Sufficient and necessary condition of full separability for 3-qubit systems was derived. The relation among bipartite concurrence, concurrence of assistance and genuine tripartite entanglement for $2\otimes2\otimes n$ dimensional quantum states was presented in \cite{yss}.
In \cite{vh}, non full separability criterion in multipartite quantum systems based on correlation tensors was discussed. Using correlation tensors, the authors in \cite{syl} have provided full separability criteria for bipartite and multipartite quantum states.

In this paper, we study necessary conditions of separability for multipartite quantum states based on correlation tensors. It is known that any n-partite pure state that can be written as a tensor product $|\varphi\rangle\langle\varphi|=|\varphi_A\rangle\langle\varphi_A|\otimes |\varphi_{\overline{A}}\rangle\langle\varphi_{\overline{A}}|$ with respect to some bipartition $A\overline{A}$ ($A$ denoting some subset of subsystems and $\overline{A}$ its complement) is called biseparable. And any mixed state that can be decomposed into a convex sum of biseparable pure states is called biseparable. Consequently, any non-biseparable mixed state is called genuinely multipartite entangled. However, we mainly study separability under specific partition rather than biseparable case for multipartite quantum states. Therefore, our methods can detect entangled states rather than genuinely multipartite entangled states. In Section 2, we present an upper bound for the norm of correlation tensors and separability criteria under any partition by constructing matrices for tripartite quantum states. By a detailed example, our results are seen to outperform previously published results. In Section 3, we generalize an inequality of the norm of the correlation tensors for four-partite states and derive the necessary conditions of separability under different partition for four-partite quantum states. We also give examples to show that our criteria can detect more entangled states than previous available results. In Section 4, we generalize the norm of correlation tensors to multipartite quantum systems and obtain necessary conditions of separability under k-partitions. Comments and conclusions are given in Section 5.

\section{Separability criteria for tripartite Quantum States}\label{sec2}
We first consider the separability criteria for tripartite quantum states. Let $H_n^{d_n}$ ($n=1,2,3$) be $d_n$-dimensional Hilbert spaces. Let $\lambda_{i_f}^{(f)}$, $i_f=1,\cdots,d_f^2-1$, $f=1,2,3,$ denote the mutually orthogonal generators of the special unitary Lie algebra $\mathfrak{su}{(d_f)}$ under
a fixed bilinear form, and $I$ the $d_m\times d_m$ identity matrix ($m=1,2,3$). A tripartite state $\rho\in H_1^{d_1}\otimes H_2^{d_2}\otimes H_3^{d_3}$ can be written as follows:
\begin{eqnarray}
&&\rho=\frac{1}{d_1d_2d_3}I\otimes I\otimes I+\sum_{f=1}^3\frac{d_f}{2d_1d_2d_3}\sum_{i_f=1}^{d_f^2-1}t_{i_f}^f\lambda_{i_f}^{(f)}\otimes I\otimes I+\\
&&+\sum_{1\leq f<g\leq 3}\frac{d_fd_g}{4d_1d_2d_3}\sum_{i_f=1}^{d_f^2-1}\sum_{i_g=1}^{d_g^2-1}t_{i_fi_g}^{fg}\lambda_{i_f}^{(f)}\otimes\lambda_{i_g}^{(g)}\otimes I+\frac{1}{8}\sum_{i_1=1}^{d_1^2-1}\sum_{i_2=1}^{d_2^2-1}\sum_{i_3=1}^{d_3^2-1}t_{i_1i_2i_3}^{123}\lambda_{i_1}^{(1)}\otimes\lambda_{i_2}^{(2)}\otimes\lambda_{i_3}^{(3)},\nonumber
\end{eqnarray}
where $\lambda_{i_f}^{(f)}$ or $\lambda_{i_g}^{(g)}$ ($(f)$ or $(g)$ refers the position of $\lambda_{i_f}$ or $\lambda_{i_g}$ in the tensor product) stands for the operators with $\lambda_{i_f}$ or $\lambda_{i_g}$ on $H_{d_f}$ or $H_{d_g}$, and $I$ on the respective spaces, $t_{i_f}^f=tr(\rho\lambda_{i_f}^{(f)}\otimes I\otimes I),$ $t_{i_fi_g}^{fg}=tr(\rho\lambda_{i_f}^{(f)}\otimes\lambda_{i_g}^{(g)}\otimes I),$ $t_{i_1i_2i_3}^{123}=tr(\rho\lambda_{i_1}^{(1)}\otimes\lambda_{i_2}^{(2)}\otimes\lambda_{i_3}^{(3)})$. Let $T^{(f)}, T^{(fg)}, T^{(123)}$ be the vectors (tensors) with entries $t_{i_f}^f, t_{i_fi_g}^{fg}, t_{i_1i_2i_3}^{123}$ respectively. And $\|\cdot\|$ stand for the Hilbert-Schmidt norm or Frobenius norm, then we have $\|T^{(f)}\|^2=\sum_{i_f=1}^{d_f^2-1}(t_{i_f}^f)^2$, $\|T^{(fg)}\|^2=\sum_{i_f=1}^{d_f^2-1}\sum_{i_g=1}^{d_g^2-1}(t_{i_fi_g}^{fg})^2$ and $\|T^{(123)}\|^2=\sum_{i_1=1}^{d_1^2-1}\sum_{i_2=1}^{d_2^2-1}\sum_{i_3=1}^{d_3^2-1}(t_{i_1i_2i_3}^{123})^2$. The trace norm is defined as the sum of the singular values of the matrix $A\in\mathbb{R}^{m\times n}$, i.e., $\|A\|_{tr}=\sum_i\sigma_i=tr\sqrt{A^\dag A}$, where $\sigma_i$, $i=1,\cdots,min(m,n)$, are the singular values of the matrix $A$ arranged in descending order.

In particular, for any pure state $\rho\in H_1^{d_1}\otimes H_2^{d_2}$, $2\leq d_1\leq d_2$, we have $\rho=\frac{1}{d_1d_2}I_{d_1}\otimes I_{d_2}+\frac{1}{2d_2}\sum_{i_1=1}^{d_1^2-1}t^1_{i_1}\lambda_{i_1}^{(1)}\otimes I_{d_2}+\frac{1}{2d_1}\sum_{i_2=1}^{d_2^2-1}t^2_{i_2}I_{d_1}\otimes \lambda_{i_2}^{(2)}+\frac{1}{4}\sum_{i_1=1}^{d_1^2-1}\sum_{i_2=1}^{d_2^2-1}t^{12}_{i_1i_2}\lambda_{i_1}^{(1)}\otimes \lambda_{i_2}^{(2)}$.
\begin{lemma}\label{lemma:1}
Let $\rho\in H_1^{d_1}\otimes H_2^{d_2}$ be a pure state, for $d_1\leq d_2$,
\begin{eqnarray}
\|T^{(12)}\|^2\leq \frac{4(d_2^2-1)}{d_2^2}.
\end{eqnarray}
\end{lemma}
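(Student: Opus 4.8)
The plan is to read off $\|T^{(12)}\|^2$ from the purity identity $\mathrm{tr}(\rho^2)=1$, which characterizes pure states. The idea is that squaring the Bloch expansion of $\rho$ and taking the trace turns the four groups of terms (the identity part, the two local parts, and the correlation part) into the four squared norms appearing in the representation, because the generators are mutually orthogonal. So the first step is to expand $\rho^2$ using the orthogonality relations $\mathrm{tr}\bigl(\lambda_{i}^{(f)}\lambda_{j}^{(f)}\bigr)=2\delta_{ij}$, $\mathrm{tr}\bigl(\lambda_{i}^{(f)}\bigr)=0$, and $\mathrm{tr}(I_{d_m})=d_m$.

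Carrying this out, every cross term between distinct groups vanishes: products involving a single $\lambda_{i}^{(f)}\otimes I$ against $I\otimes\lambda_{j}^{(g)}$, or against $\lambda_{i_1}^{(1)}\otimes\lambda_{i_2}^{(2)}$, always leave at least one traceless factor and hence drop out. What survives are the diagonal contributions, which give
\begin{equation}
1=\mathrm{tr}(\rho^2)=\frac{1}{d_1d_2}+\frac{1}{2d_2}\|T^{(1)}\|^2+\frac{1}{2d_1}\|T^{(2)}\|^2+\frac{1}{4}\|T^{(12)}\|^2.
\end{equation}
Here the coefficients $\tfrac{1}{d_1d_2}$, $\tfrac{1}{2d_2}$, $\tfrac{1}{2d_1}$, $\tfrac14$ are exactly the squares of the prefactors in the Bloch representation multiplied by the relevant factors of $2$ and $d_m$ coming from the orthogonality relations; verifying these numerical factors is the only genuinely delicate bookkeeping step, so I would double-check them against the consistency condition $\mathrm{tr}\bigl(\rho\,\lambda_{j_1}^{(1)}\otimes\lambda_{j_2}^{(2)}\bigr)=t^{12}_{j_1j_2}$.

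From the identity the bound is immediate. Since $\|T^{(1)}\|^2\ge 0$ and $\|T^{(2)}\|^2\ge 0$, dropping these two nonnegative terms yields
\begin{equation}
\frac{1}{4}\|T^{(12)}\|^2\le 1-\frac{1}{d_1d_2}.
\end{equation}
Finally, because $d_1\le d_2$ we have $\tfrac{1}{d_1d_2}\ge\tfrac{1}{d_2^2}$, so that $1-\tfrac{1}{d_1d_2}\le 1-\tfrac{1}{d_2^2}$, and therefore $\|T^{(12)}\|^2\le 4\bigl(1-\tfrac{1}{d_2^2}\bigr)=\frac{4(d_2^2-1)}{d_2^2}$, as claimed. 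I expect no serious obstacle beyond the trace bookkeeping: the whole argument rests on $\mathrm{tr}(\rho^2)=1$ together with positivity of the local norms. I note in passing that retaining the local terms and using the fact that for a pure state the two reduced matrices share the same purity $P\ge 1/d_1$ (the Schmidt rank being at most $d_1$) would sharpen the estimate to $\|T^{(12)}\|^2\le\frac{4(d_1^2-1)}{d_1^2}$; the stated $d_2$-form follows from this by the same inequality $d_1\le d_2$, but the cruder argument above already suffices.
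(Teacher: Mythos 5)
Your proof is correct, and it reaches the lemma by a slightly more elementary route than the paper. Both arguments begin from the same Bloch expansion of purity,
$1=\mathrm{tr}(\rho^2)=\frac{1}{d_1d_2}+\frac{1}{2d_2}\|T^{(1)}\|^2+\frac{1}{2d_1}\|T^{(2)}\|^2+\frac{1}{4}\|T^{(12)}\|^2$,
but they finish differently. The paper invokes a second identity, $\mathrm{tr}(\rho_1^2)=\mathrm{tr}(\rho_2^2)$ (equal purities of the two reductions of a bipartite pure state, a consequence of the Schmidt decomposition), uses it to eliminate $\|T^{(1)}\|^2$, and arrives at the exact relation $\|T^{(12)}\|^2=\frac{4(d_2^2-1)}{d_2^2}-\frac{2(d_1+d_2)}{d_1d_2}\|T^{(2)}\|^2$, from which the bound follows by dropping the last, manifestly nonnegative, term. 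You instead drop \emph{both} local terms and then trade $\frac{1}{d_1d_2}$ for $\frac{1}{d_2^2}$ using $d_1\le d_2$; this bypasses the Schmidt-symmetry fact entirely, at the cost of discarding the exact deficit formula that the paper's route records. Your closing observation is also sound: keeping the local terms and using $\mathrm{tr}(\rho_1^2)=\mathrm{tr}(\rho_2^2)\ge 1/d_1$ yields the stronger bound $\|T^{(12)}\|^2\le\frac{4(d_1^2-1)}{d_1^2}$ (equivalently, insert $\|T^{(2)}\|^2\ge\frac{2}{d_1}-\frac{2}{d_2}$ into the paper's identity), and this dominates the stated $d_2$-form since $d_1\le d_2$. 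In short, your argument proves the lemma as stated with fewer ingredients; the paper's version carries more structural information along the way, though it does not exploit it beyond the same final inequality.
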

\begin{proof}
Let $\rho_1$ and $\rho_2$ be the density matrices with respect to the subsystem $H_1$ and $H_2$. For a pure state $\rho$, we have $tr(\rho^2)=1$ and $tr(\rho_1^2)=tr(\rho_2^2)$, i.e.,
\be
&&tr(\rho^2)=\frac{1}{d_1d_2}+\frac{1}{2d_2}\|T^{(1)}\|^2+\frac{1}{2d_1}\|T^{(2)}\|^2+\frac{1}{4}\|T^{(12)}\|^2=1,\nonumber\\
&&tr(\rho_1^2)=\frac{1}{d_1}+\frac{1}{2}\|T^{(1)}\|^2=tr(\rho_2^2)=\frac{1}{d_2}+\frac{1}{2}\|T^{(2)}\|^2.
\ee
Therefore
\begin{eqnarray}
\|T^{(12)}\|^2=\frac{4(d_2^2-1)}{d_2^2}-\frac{2(d_1+d_2)}{d_1d_2}\|T^{(2)}\|^2\leq\frac{4(d_2^2-1)}{d_2^2}.
\end{eqnarray}
\begin{flushright}
$\square$
\end{flushright}
\end{proof}
\begin{proposition}\label{prop:1}
Let $\rho\in H_1^{d_1}\otimes H_2^{d_2}\otimes H_3^{d_3}$ be a pure state, for $d_1\leq d_2\leq d_3$,
\begin{eqnarray}
\|T^{(123)}\|^2\leq8(1-\frac{d_1d_2+d_1d_3+d_2d_3-d_1-d_2}{d_1d_2d_3^2}).
\end{eqnarray}
\end{proposition}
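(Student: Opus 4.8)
The plan is to mimic the proof of Lemma~\ref{lemma:1}, now exploiting the two purity facts available for a tripartite pure state: $\mathrm{tr}(\rho^2)=1$, and $\mathrm{tr}(\rho_f^2)=\mathrm{tr}(\rho_{\overline f}^2)$ for each bipartition $f\mid\overline f$ (equality of Schmidt spectra). First I would expand $\mathrm{tr}(\rho^2)$ directly from the Bloch form using $\mathrm{tr}(\lambda_i\lambda_j)=2\delta_{ij}$, $\mathrm{tr}(\lambda_i)=0$ and $\mathrm{tr}(I_d)=d$. All cross terms vanish by orthogonality, and collecting the diagonal contributions gives the purity identity
\[
1=\frac{1}{d_1d_2d_3}+\sum_{f}\frac{1}{2d_gd_h}\|T^{(f)}\|^2+\sum_{f<g}\frac{1}{4d_h}\|T^{(fg)}\|^2+\frac18\|T^{(123)}\|^2,
\]
where $\{f,g,h\}=\{1,2,3\}$ throughout.

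Next I would compute the reduced states by partial trace; since tracing out a subsystem annihilates every tensor factor carrying a traceless generator, one obtains $\mathrm{tr}(\rho_f^2)=\frac{1}{d_f}+\frac12\|T^{(f)}\|^2$ and $\mathrm{tr}(\rho_{fg}^2)=\frac{1}{d_fd_g}+\frac{1}{2d_g}\|T^{(f)}\|^2+\frac{1}{2d_f}\|T^{(g)}\|^2+\frac14\|T^{(fg)}\|^2$, exactly as in Lemma~\ref{lemma:1}. Imposing $\mathrm{tr}(\rho_1^2)=\mathrm{tr}(\rho_{23}^2)$, $\mathrm{tr}(\rho_2^2)=\mathrm{tr}(\rho_{13}^2)$ and $\mathrm{tr}(\rho_3^2)=\mathrm{tr}(\rho_{12}^2)$ yields three linear relations that solve for the two-party norms $\|T^{(23)}\|^2,\|T^{(13)}\|^2,\|T^{(12)}\|^2$ in terms of the single-party norms.

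Substituting these into the purity identity eliminates every two-party norm and leaves the exact expression
\[
\tfrac18\|T^{(123)}\|^2=1-\tfrac{1}{d_1^2}-\tfrac{1}{d_2^2}-\tfrac{1}{d_3^2}+\tfrac{2}{d_1d_2d_3}-\tfrac{d_2d_3-d_1}{2d_1d_2d_3}\|T^{(1)}\|^2-\tfrac{d_1d_3-d_2}{2d_1d_2d_3}\|T^{(2)}\|^2-\tfrac{d_1d_2-d_3}{2d_1d_2d_3}\|T^{(3)}\|^2.
\]
The coefficients of $\|T^{(1)}\|^2$ and $\|T^{(2)}\|^2$ are positive (since $d_2d_3>d_1$ and $d_1d_3>d_2$ whenever all $d_f\ge2$), so discarding those two nonnegative terms only enlarges the right-hand side and keeps it an upper bound. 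The remaining term I would control by the admissible range of $\|T^{(3)}\|^2$, namely $0\le\|T^{(3)}\|^2\le 2(1-1/d_3)$, the upper end coming from $\mathrm{tr}(\rho_3^2)\le1$; simplifying then produces an upper bound that I compare against the target $8\bigl(1-\frac{d_1d_2+d_1d_3+d_2d_3-d_1-d_2}{d_1d_2d_3^2}\bigr)$.

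The one genuinely delicate point is the sign of the coefficient $d_1d_2-d_3$ of $\|T^{(3)}\|^2$, which the hypothesis $d_1\le d_2\le d_3$ does \emph{not} fix: when $d_1d_2\ge d_3$ the term is dropped directly, whereas when $d_1d_2<d_3$ one must instead insert $\|T^{(3)}\|^2\le 2(1-1/d_3)$, and in each regime a short algebraic comparison is required to confirm that the resulting quantity is majorized by the stated right-hand side. This case split — rather than the (routine) trace algebra of the first two steps — is where I expect the real work to lie.
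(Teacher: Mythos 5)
Your route is sound and your exact identity is correct (it can be checked, for instance, against the GHZ state, where it gives $\tfrac18\|T^{(123)}\|^2=\tfrac12$ with all one-body norms vanishing), but it is genuinely different from the paper's proof after the common first steps, and the difference is instructive. The paper also starts from $\mathrm{tr}(\rho^2)=1$ and $\mathrm{tr}(\rho_f^2)=\mathrm{tr}(\rho_{gh}^2)$, but it never solves for the two-body norms individually: since $d_3$ is the largest dimension, it first relaxes the weights in $-2\bigl(\tfrac{1}{d_3}\|T^{(12)}\|^2+\tfrac{1}{d_2}\|T^{(13)}\|^2+\tfrac{1}{d_1}\|T^{(23)}\|^2\bigr)$ to the common (smallest) weight $\tfrac{1}{d_3}$, and only then substitutes the \emph{sum} of the three reduction relations. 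With that ordering the constant term collapses to exactly the claimed bound, and every one-body coefficient is manifestly nonnegative --- in particular the coefficient of $\|T^{(3)}\|^2$ comes out proportional to $d_3+d_1d_2-d_1-d_2=(d_1-1)(d_2-1)+d_3-1>0$ --- so all one-body terms are discarded in one stroke, with no case distinction and no terminal comparison. Your exact elimination instead leaves the sign-ambiguous coefficient $d_1d_2-d_3$, forcing the case split you anticipate; both branches do close, but the Case-1 comparison is not as routine as ``discard and compare'' suggests: after clearing denominators it amounts to
\[
d_3^2(d_1^2+d_2^2)-d_1d_2(d_1+d_2+2)\,d_3+d_1d_2(d_1+d_2)\;\ge\;0,
\]
a quadratic in $d_3$ whose vertex lies at or below $d_2$ and whose value at $d_3=d_2$ equals $d_2(d_2-d_1)(d_2^2-d_1)\ge0$, hence it is nonnegative for all $d_3\ge d_2$, with \emph{equality} when $d_1=d_2=d_3$, so this step is tight rather than slack; Case 2, after inserting $\|T^{(3)}\|^2\le 2(d_3-1)/d_3$, reduces to an easier estimate using $d_1^2+d_2^2\ge 2d_1d_2$ together with $d_1d_2<d_3$ and $d_1,d_2\ge2$. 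In short: your approach works and correctly locates where the real work lies, while the paper's weighted substitution is engineered precisely so that this work never appears --- that is what its ordering of the two steps buys.
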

\begin{proof}
Let $\rho_{i_1}$ and $\rho_{i_2i_3}$ be the density matrices with respect to the subsystem $H_{d_{i_1}}$, $i_1=1, 2, 3$, and $H_{d_{i_2}d_{i_3}}$, $1\leq i_2<i_3\leq3$. For a pure state $\rho$, we have $tr(\rho^2)=1$ and $tr(\rho_{i_1}^2)=tr(\rho_{i_2i_3}^2)$, i.e.,
\begin{eqnarray}
tr(\rho^2)&=&\frac{1}{d_1d_2d_3}+\frac{1}{2}(\frac{1}{d_2d_3}\|T^{(1)}\|^2+\frac{1}{d_1d_3}\|T^{(2)}\|^2+\frac{1}{d_1d_2}\|T^{(3)}\|^2)+\frac{1}{4}(\frac{1}{d_3}\|T^{(12)}\|^2\nonumber\\
&+&\frac{1}{d_2}\|T^{(13)}\|^2+\frac{1}{d_1}\|T^{(23)}\|^2)+\frac{1}{8}\|T^{(123)}\|^2=1,\\
tr(\rho_{i_1}^2)&=&\frac{1}{d_{i_1}}+\frac{1}{2}\|T^{(i_1)}\|^2=tr(\rho_{i_2i_3}^2)=\frac{1}{d_{i_2}d_{i_3}}+\frac{1}{2}(\frac{1}{d_{i_3}}\|T^{(i_2)}\|^2+\frac{1}{d_{i_2}}\|T^{(i_3)}\|^2)+\frac{1}{4}\|T^{(i_2i_3)}\|^2\nonumber.
\end{eqnarray}
Therefore
\begin{eqnarray}
\|T^{(123)}\|^2&=&8(1-\frac{1}{d_1d_2d_3})-4(\frac{1}{d_2d_3}\|T^{(1)}\|^2+\frac{1}{d_1d_3}\|T^{(2)}\|^2+\frac{1}{d_1d_2}\|T^{(3)}\|^2)\nonumber\\
&-&2(\frac{1}{d_3}\|T^{(12)}\|^2+\frac{1}{d_2}\|T^{(13)}\|^2+\frac{1}{d_1}\|T^{(23)}\|^2)\nonumber\\
&\leq&8(1-\frac{d_1d_2+d_1d_3+d_2d_3-d_1-d_2}{d_1d_2d_3^2})-4[\frac{d_2(d_3-1)}{d_2d_3}\|T^{(1)}\|^2\nonumber\\
&+&\frac{d_1(d_3-1)}{d_1d_3}\|T^{(2)}\|^2+\frac{d_3+d_1d_2-(d_1+d_2)}{d_1d_3}\|T^{(3)}\|^2]\nonumber\\
&\leq&8(1-\frac{d_1d_2+d_1d_3+d_2d_3-d_1-d_2}{d_1d_2d_3^2}).
\end{eqnarray}
\begin{flushright}
$\square$
\end{flushright}
\end{proof}

{\bf Remark 1.} When $d_1=d_2=d_3=d$, we can obtain that $\|T^{(123)}\|\leq\frac{1}{d^3}(8d^3-24d+16)$. Proposition \ref{prop:1} is a generalization of the Theorem 1 given in \cite{lww}.

For the tripartite quantum state $\rho\in H_1^{d_1}\otimes H_2^{d_2}\otimes H_3^{d_3}$, we denote the bipartitions as follows: $f|gh, fg|h$ and $f\neq g\neq h\in\{1, 2, 3\}$. We first consider the separability of $\rho$ under bipartition $f|gh$.
Then we construct the matrix $S(\rho_{f|gh})$ by
\begin{eqnarray}
S(\rho_{f|gh})=
\left( \begin{array}{cccc}
            1& \displaystyle(T^{(g)})^t&\displaystyle(T^{(h)})^t& \displaystyle(T^{(gh)})^t \\
          \displaystyle T^{(f)} &\displaystyle T^{(fg)}&\displaystyle T^{(fh)}&\displaystyle T^{(fgh)}\\
           \end{array}
      \right ).
\end{eqnarray}
Using this matrix and the inequality for 1-body correlation tensors $\|T^{(j)}\|^2\leq \frac{2(d_j-1)}{d_j}(j=f, g, h)$\cite{vh} and Lemma \ref{lemma:1}, we get the following separability criterion.  
\begin{theorem}\label{thm:1}
If the state $\rho\in H_1^{d_1}\otimes H_2^{d_2}\otimes H_3^{d_3}$ is separable under the bipartition $f|gh$, then
\begin{eqnarray}
\|S(\rho_{f|gh})\|_{tr}\leq\sqrt{\frac{(3d_f-2)(9d_gd_h^2-2d_h^2-2d_gd_h-4d_g)}{d_fd_gd_h^2}}.
\end{eqnarray}
\end{theorem}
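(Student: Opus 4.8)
The plan is to exploit the block structure of $S(\rho_{f|gh})$ together with the triangle inequality for the trace norm. First I would write a state separable across $f|gh$ as a convex combination of pure product states, $\rho=\sum_k p_k\,|\phi_f^k\rangle\langle\phi_f^k|\otimes|\psi_{gh}^k\rangle\langle\psi_{gh}^k|$, which is always possible by decomposing each local factor into pure states. For such a product term the one-body tensor $T^{(f)}$ reduces to the Bloch vector $s^{(f)}_k$ of $|\phi_f^k\rangle$, while $T^{(g)},T^{(h)},T^{(gh)}$ reduce to the correlation data $u^{(g)}_k,u^{(h)}_k,u^{(gh)}_k$ of the pure bipartite state $|\psi_{gh}^k\rangle$, and the mixed tensors factorize as $T^{(fg)}=s^{(f)}_k(u^{(g)}_k)^t$, $T^{(fh)}=s^{(f)}_k(u^{(h)}_k)^t$, and $T^{(fgh)}=s^{(f)}_k\,\mathrm{vec}(u^{(gh)}_k)^t$.

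The key observation is then that the contribution of each product term to $S(\rho_{f|gh})$ is a rank-one matrix $S_k=a_k b_k^t$, with $a_k=(1,\,(s^{(f)}_k)^t)^t\in\mathbb{R}^{d_f^2}$ and $b_k=(1,\,(u^{(g)}_k)^t,\,(u^{(h)}_k)^t,\,\mathrm{vec}(u^{(gh)}_k)^t)^t\in\mathbb{R}^{d_g^2 d_h^2}$. Since the trace norm of a rank-one matrix equals the product of the Euclidean norms of its two factors, $\|S_k\|_{tr}=\|a_k\|\,\|b_k\|$. Writing $S(\rho_{f|gh})=\sum_k p_k S_k$ and applying the triangle inequality for the trace norm then gives $\|S(\rho_{f|gh})\|_{tr}\le\sum_k p_k\|a_k\|\,\|b_k\|$, which reduces the whole problem to a uniform scalar bound on the product $\|a_k\|\,\|b_k\|$.

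For the first factor, the one-body inequality $\|T^{(f)}\|^2\le\tfrac{2(d_f-1)}{d_f}$ yields $\|a_k\|^2=1+\|s^{(f)}_k\|^2\le\tfrac{3d_f-2}{d_f}$. For the second factor I would bound $\|b_k\|^2=1+\|u^{(g)}_k\|^2+\|u^{(h)}_k\|^2+\|u^{(gh)}_k\|^2$ termwise, using the one-body inequalities for $u^{(g)}_k$ and $u^{(h)}_k$ and, crucially, Lemma \ref{lemma:1} applied to the pure bipartite state $|\psi_{gh}^k\rangle$ (with the labeling $d_g\le d_h$) to get $\|u^{(gh)}_k\|^2\le\tfrac{4(d_h^2-1)}{d_h^2}$. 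Summing gives $\|b_k\|^2\le 9-\tfrac{2}{d_g}-\tfrac{2}{d_h}-\tfrac{4}{d_h^2}=\tfrac{9d_gd_h^2-2d_h^2-2d_gd_h-4d_g}{d_gd_h^2}$. Multiplying the two bounds and using $\sum_k p_k=1$ reproduces exactly the claimed inequality.

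I expect the main obstacle to be the very first step: recognizing that each pure-product term collapses the augmented tensor matrix to rank one. This is precisely what makes the trace norm split as a product of Euclidean norms, after which convexity (the triangle inequality) finishes the argument and everything reduces to the scalar purity bounds supplied by the one-body inequality and Lemma \ref{lemma:1}. A secondary point to handle carefully is fixing the convention $d_g\le d_h$, so that Lemma \ref{lemma:1} produces the factor $4(d_h^2-1)/d_h^2$ with $d_h$ in the dominant position, matching the powers of $d_h$ that appear in the target bound.
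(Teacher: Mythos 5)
Your proposal is correct and follows essentially the same route as the paper: express $S(\rho_{f|gh})$ as a convex combination of rank-one matrices, use $\|ab^t\|_{tr}=\|a\|\,\|b\|$ together with the triangle inequality for the trace norm, and then bound the two Euclidean factors by the one-body inequality $\|T^{(j)}\|^2\le 2(d_j-1)/d_j$ and Lemma~\ref{lemma:1}. The only real difference is that you refine the separable decomposition down to pure product states so that Lemma~\ref{lemma:1} (stated for pure states) applies verbatim to $|\psi_{gh}^k\rangle$, which is slightly cleaner than the paper's application of the lemma to the possibly mixed factors $\rho_l^{(gh)}$, a step that implicitly relies on convexity of the correlation-tensor norm.
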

\begin{proof}
A tripartite mixed state $\rho$ is separable under bipartition $f|gh$ whenever it can be expressed as
\begin{eqnarray}
\rho_{f|gh}=\sum_lp_l\rho_l^{(f)}\otimes\rho_l^{(gh)},
\end{eqnarray}
where the probabilities $p_l>0, \sum_lp_l=1$. Let $\rho_l^{(f)}=\frac{1}{d_f}I_{d_f}+\frac{1}{2}\sum_{i_f=1}^{d_f^2-1}t_{li_f}^f\lambda_{i_f}^{(f)}$, $\rho_l^{(gh)}=\frac{1}{d_gd_h}I_{d_g}\otimes I_{d_h}+\frac{1}{2d_h}\sum_{i_g=1}^{d_g^2-1}t_{li_g}^g \lambda_{i_g}^{(g)}\otimes I_{d_h}+\frac{1}{2d_g}\sum_{i_h=1}^{d_h^2-1}t_{li_h}^hI_{d_g}\otimes \lambda_{i_h}^{(h)}+\frac{1}{4}\sum_{i_g=1}^{d_g^2-1}\sum_{i_h=1}^{d_h^2-1}t_{li_gi_h}^{gh}\lambda_{i_g}^{(g)}\otimes\lambda_{i_h}^{(h)}$. Let $v_l^f, v_l^g, v_l^h$ and $v_l^{gh}$ be the column vectors with entries $t_{li_f}^f, t_{li_g}^{g}, t_{li_h}^{h}$ and $t_{li_gi_h}^{gh}$ respectively. Therefore,
\begin{eqnarray}
&&T^{(f)}=\sum_lp_lv_l^f,\  T^{(g)}=\sum_lp_lv_l^g,\  T^{(h)}=\sum_lp_lv_l^h,\  T^{(fg)}=\sum_lp_lv_l^f(v_l^g)^t,\nonumber\\ &&T^{(fh)}=\sum_lp_lv_l^f(v_l^h)^t,\
T^{(gh)}=\sum_lp_lv_l^{gh},\  T^{(fgh)}=\sum_lp_lv_l^f(v_l^{gh})^t,
\end{eqnarray}
where $t$ stands for transpose.
Then the matrix $S(\rho_{f|gh})$ can be written as
\begin{eqnarray}
S(\rho_{f|gh})&=&\sum_lp_l
\left( \begin{array}{ccccccccccccccc}
            1& \displaystyle(v_l^g)^t& \displaystyle(v_l^h)^t& \displaystyle(v_l^{gh})^t \\
           \displaystyle v_l^f &\displaystyle v_l^f(v_l^g)^t&\displaystyle v_l^f(v_l^h)^t&\displaystyle v_l^f(v_l^{gh})^t\\
           \end{array}
      \right )
\nonumber\\
&=&\sum_lp_l
\left( \begin{array}{ccccccccccccccc}
              1\\
              \displaystyle v_l^f\\
            \end{array}
     \right )
\left( \begin{array}{ccccccccccccccc}
              1&\displaystyle(v_l^g)^t&\displaystyle(v_l^h)^t&\displaystyle(v_l^{gh})^t\\
            \end{array}
     \right ).
\end{eqnarray}
Thus,
\begin{eqnarray}
\|S(\rho_{f|gh})\|_{tr}&\leq&\sum_lp_l\|
\left( \begin{array}{ccccccccccccccc}
              1\\
              \displaystyle v_l^f\\
            \end{array}
     \right )
\left( \begin{array}{ccccccccccccccc}
              1&\displaystyle(v_l^g)^t&\displaystyle(v_l^h)^t&\displaystyle(v_l^{gh})^t\\
            \end{array}
     \right )\|_{tr}\nonumber\\
&=&\sum_lp_l\|
\left( \begin{array}{ccccccccccccccc}
                 1\\
               \displaystyle v_l^f\\
            \end{array}
     \right )\|
\|\left( \begin{array}{ccccccccccccccc}
                 1&\displaystyle(v_l^g)^t&\displaystyle(v_l^h)^t&\displaystyle(v_l^{gh})^t\\
            \end{array}
     \right )^t\|\nonumber\\
&=&\sum_lp_l\sqrt{1+\|v_l^f\|^2}\sqrt{1+\|v_l^g\|^2+\|v_l^h\|^2+\|v_l^{gh}\|^2}\nonumber\\
&\leq&\sqrt{\frac{(3d_f-2)(9d_gd_h^2-2d_h^2-2d_gd_h-4d_g)}{d_fd_gd_h^2}},
\end{eqnarray}
where we have used $\|A+B\|_{tr}\leq\|A\|_{tr}+\|B\|_{tr}$ for matrices $A$ and $B$ and
$\||a\rangle\langle b|\|_{tr}=\||a\rangle\|\||b\rangle\|$ for vectors $|a\rangle$ and $|b\rangle$.
\begin{flushright}
$\square$
\end{flushright}
\end{proof}

{\bf Remark 2.} We may analyze the bipartition $fg|h$ by using similar methods above. If the state $\rho\in H_1^{d_1}\otimes H_2^{d_2}\otimes H_3^{d_3}$ is separable under the bipartition $fg|l$, then
\be
\|S(\rho_{fg|h})\|_{tr}\leq\sqrt{\frac{(3d_h-2)(9d_fd_g^2-2d_g^2-2d_fd_g-4d_f)}{d_fd_g^2d_h}}.
\ee
We next consider the full separability of $\rho$.
Denote the matrix
\begin{eqnarray}
S(\rho_{f|g|h})=
\left( \begin{array}{ccccccccccccccc}
           \displaystyle(T^{(h)})^t& \displaystyle(T^{(gh)})^t  \\
           \displaystyle T^{(fh)}&\displaystyle T^{(fgh)}\\
           \end{array}
      \right ).
\end{eqnarray}
Using this matrix and the inequality for 1-body correlation tensors $\|T^{(j)}\|^2\leq \frac{2(d_j-1)}{d_j}(j=f, g, h)$\cite{vh}, we get the following separability criterion.
\begin{theorem}\label{thm:2}
If the state $\rho\in H_1^{d_1}\otimes H_2^{d_2}\otimes H_3^{d_3}$ is fully separable, then
\begin{eqnarray}
\|S(\rho_{f|g|h})\|_{tr}\leq\frac{2(d_h-1)}{d_h}\sqrt{\frac{(3d_f-2)(3d_g-2)}{d_fd_g}}.
\end{eqnarray}
\end{theorem}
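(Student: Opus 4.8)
The plan is to mirror the argument of Theorem \ref{thm:1}, now splitting all three subsystems at once. First I would write full separability as $\rho=\sum_l p_l\,\rho_l^{(f)}\otimes\rho_l^{(g)}\otimes\rho_l^{(h)}$ with $p_l>0$, $\sum_l p_l=1$, and expand each single-party factor in Bloch form, $\rho_l^{(j)}=\frac{1}{d_j}I_{d_j}+\frac12\sum_{i_j}t_{li_j}^j\lambda_{i_j}^{(j)}$, collecting the coefficients into column vectors $v_l^f,v_l^g,v_l^h$. Because $\rho$ is a product at each index $l$, every correlation tensor appearing in the matrix factorizes into the corresponding tensor product of single-party Bloch vectors: $T^{(h)}=\sum_l p_l v_l^h$, $T^{(gh)}=\sum_l p_l\,v_l^g\otimes v_l^h$, $T^{(fh)}=\sum_l p_l\,v_l^f(v_l^h)^t$, and $T^{(fgh)}=\sum_l p_l\,v_l^f(v_l^g\otimes v_l^h)^t$.

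Substituting these into the block matrix $S(\rho_{f|g|h})$, I expect each summand to collapse to a rank-one outer product,
\begin{eqnarray}
S(\rho_{f|g|h})=\sum_l p_l \begin{pmatrix} 1 \\ v_l^f \end{pmatrix} \begin{pmatrix} (v_l^h)^t & (v_l^g\otimes v_l^h)^t \end{pmatrix},\nonumber
\end{eqnarray}
the distinctive feature being that the $h$-vector is common to both column blocks of the right factor, while $f$ sits in the left factor and $g$ enters only the second block of the right factor. I would then apply the triangle inequality $\|A+B\|_{tr}\le\|A\|_{tr}+\|B\|_{tr}$ together with the rank-one identity $\||a\rangle\langle b|\|_{tr}=\||a\rangle\|\,\||b\rangle\|$ to obtain $\|S(\rho_{f|g|h})\|_{tr}\le\sum_l p_l\sqrt{1+\|v_l^f\|^2}\,\|v_l^h\|\sqrt{1+\|v_l^g\|^2}$, where the multiplicativity $\|v_l^g\otimes v_l^h\|=\|v_l^g\|\,\|v_l^h\|$ of the Euclidean norm under tensor products is used to factor $\|v_l^h\|$ out of the norm of the right block.

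Finally I would invoke the single-party bound $\|T^{(j)}\|^2\le\frac{2(d_j-1)}{d_j}$ applied to each $v_l^j$, so that $1+\|v_l^f\|^2\le\frac{3d_f-2}{d_f}$ and $1+\|v_l^g\|^2\le\frac{3d_g-2}{d_g}$, while for the shared index $\|v_l^h\|\le\frac{2(d_h-1)}{d_h}$ (valid since $2(d_h-1)/d_h\ge1$ for $d_h\ge2$); summing against $\sum_l p_l=1$ then yields the claimed inequality. The main subtlety is the bookkeeping: one must correctly identify that $h$ is the shared subsystem entering both blocks of the right factor, so that its Bloch vector appears \emph{linearly} outside the square root rather than being paired symmetrically with $f$ and $g$. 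Getting this pairing right is exactly what produces the asymmetric $\frac{2(d_h-1)}{d_h}$ prefactor and distinguishes the $h$-dependence from the $f$- and $g$-dependence.
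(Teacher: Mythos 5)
Your proposal is correct and follows essentially the same route as the paper: the same convex decomposition into products of single-party Bloch-expanded states, the same rank-one factorization $S(\rho_{f|g|h})=\sum_l p_l\bigl(\begin{smallmatrix}1\\ v_l^f\end{smallmatrix}\bigr)\bigl((v_l^h)^t \;\ (v_l^g\otimes v_l^h)^t\bigr)$ (which the paper writes equivalently as $\bigl(\begin{smallmatrix}1\\ v_l^f\end{smallmatrix}\bigr)(1\;\ (v_l^g)^t)\otimes(v_l^h)^t$), and the same use of the trace-norm triangle inequality, the rank-one identity, and multiplicativity of the Euclidean norm under tensor products. Your explicit justification of $\|v_l^h\|\le\sqrt{2(d_h-1)/d_h}\le 2(d_h-1)/d_h$ via $\sqrt{x}\le x$ for $x\ge 1$ is a step the paper leaves implicit, so your write-up is, if anything, slightly more careful on that point.
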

\begin{proof}
A tripartite mixed state $\rho$ is fully separable whenever it can be expressed as
\begin{eqnarray}
\rho_{f|g|h}=\sum_lp_l\rho_l^{(f)}\otimes\rho_l^{(g)}\otimes\rho_l^{(h)},
\end{eqnarray}
where the probabilities $p_l>0, \sum_lp_l=1$. Let $\rho_l^{(f)}=\frac{1}{d_f}I_{d_f}+\frac{1}{2}\sum_{i_f=1}^{d_f^2-1}t_{li_f}^f\lambda_{i_f}^{(f)}$, $\rho_l^{(g)}=\frac{1}{d_g}I_{d_g}+\frac{1}{2}\sum_{i_g=1}^{d_g^2-1}t_{li_g}^g\lambda_{i_g}^{(g)}$, $\rho_l^{(h)}=\frac{1}{d_h}I_{d_h}+\frac{1}{2}\sum_{i_h=1}^{d_h^2-1}t_{li_h}^h\lambda_{i_h}^{(h)}$. Let $v_l^f, v_l^g$ and $v_l^h$ be the column vectors with entries $t_{li_f}^f, t_{li_g}^{g}$ and $t_{li_h}^{h}$ respectively. Therefore
\begin{eqnarray}
T^{(h)}=\sum_lp_lv_l^h, T^{(fh)}=\sum_lp_lv_l^f(v_l^h)^t, T^{(gh)}=\sum_lp_lv_l^g\otimes v_l^h, T^{(fgh)}=\sum_lp_lv_l^f(v_l^g\otimes v_l^h)^t.
\end{eqnarray}
It follows that the matrix $S(\rho_{f|g|h})$ can be written as
\begin{eqnarray}
S(\rho_{f|g|h})=\sum_lp_l
\left( \begin{array}{ccccccccccccccc}
            \displaystyle (v_l^h)^t&\displaystyle(v_l^g\otimes v_l^h)^t\\
             \displaystyle v_l^f (v_l^h)^t&\displaystyle v_l^f (v_l^g\otimes v_l^h)^t\\
           \end{array}
      \right )
=\sum_lp_l
\left( \begin{array}{ccccccccccccccc}
             1\\
             \displaystyle v_l^f\\
            \end{array}
     \right )
\left( \begin{array}{ccccccccccccccc}
             1&\displaystyle(v_l^g)^t\\
            \end{array}
     \right )
\otimes\ (v_l^h)^t.
\end{eqnarray}
Taking the norm, we get that
\begin{eqnarray}
\|S(\rho_{f|g|h})\|_{tr}&\leq&\sum_lp_l\|
\left( \begin{array}{ccccccccccccccc}
             1\\
             v_l^f\\
            \end{array}
     \right )
\left( \begin{array}{ccccccccccccccc}
             1&(v_l^g)^t\\
            \end{array}
     \right )
\otimes(v_l^h)^t\|_{tr}
=\sum_lp_l\|
\left( \begin{array}{ccccccccccccccc}
             1\\
             v_l^f\\
            \end{array}
     \right )\|
\|\left( \begin{array}{ccccccccccccccc}
             1\\
             v_l^g\\
            \end{array}
     \right )
\otimes v_l^h\|\nonumber\\
&=&\sum_lp_l\|
\left( \begin{array}{ccccccccccccccc}
             1\\
             v_l^f\\
            \end{array}
     \right )\|
\|\left( \begin{array}{ccccccccccccccc}
             1\\
             v_l^g\\
            \end{array}
     \right )\|
\|v_l^h\|
\leq\sum_lp_l\|v_l^h\|\sqrt{1+\|v_l^f\|^2}\sqrt{1+\|v_l^g\|^2}\nonumber\\
&\leq&\frac{2(d_h-1)}{d_h}\sqrt{\frac{(3d_f-2)(3d_g-2)}{d_fd_g}}.
\end{eqnarray}
\begin{flushright}
$\square$
\end{flushright}
\end{proof}
{\bf Remark 3.} By the above Theorem \ref{thm:1} and Theorem \ref{thm:2}, we have obtained upper bounds for 1-2, 2-1 and 1-1-1 separable quantum states. With these bounds a complete classification of tripartite quantum states has been derived.

In Ref. \cite{lww}, the authors analyzed necessary conditions of separability under arbitrary partition for four-patite quantum states by using the norms of correlation tensors. We can derive the necessary condition of separability under arbitrary partition for tripartite state $\rho\in H_1^d\otimes H_2^d\otimes H_3^d$ by using methods as similar as \cite{lww}, i.e.,
\begin{eqnarray}\label{spe}
\|T^{(123)}\|^2\leq\left\{
             \begin{array}{lr}
               \frac{8(d-1)^2(d+1)}{d^3},\ if \ \rho\ is \ 1-2 \ separable;\\
               \frac{8(d-1)^3}{d^3}, \  if \ \rho \ is \ 1-1-1 \ separable.
             \end{array}
\right.
\end{eqnarray}
\textit{\textbf{Example 1.}} Consider the quantum state $\rho\in H_1^2\otimes H_2^2\otimes H_3^2$:
\begin{eqnarray}
\rho=\frac{x}{8}I_8+(1-x)|\phi\rangle\langle\phi|,
\end{eqnarray}
where $|\phi\rangle=\frac{1}{\sqrt{2}}(|000\rangle+|111\rangle)$, $I_{8}$ stands for the $8\times8$ identity matrix and $x\in [0,1]$. By our Theorem \ref{thm:1} and Theorem \ref{thm:2}, we have $f_1(x)=\|S(\rho_{f|gh})\|_{tr}-2\sqrt{3}=-3x+4-2\sqrt{3}$ and $f_3(x)=\|S(\rho_{f|gh})\|_{tr}-2=-(3+\sqrt{2})x+\sqrt{2}+1$ respectively. When $f_1(x)>0$ or $f_2(x)>0$, $\rho$ is not separable under the bipartition $f|gh$ or fully separable. When $d=2$, according to  inequality (\ref{spe}), we have $f_2(x)=\|T^{(123)}\|^2-3=3x^2-6x$ and $f_4(x)=\|T^{(123)}\|^2-1=3x^2-6x+2$. And $\rho$ is not separable under the bipartition $f|gh$ or fully separable for $f_2(x)>0$ or $f_4(x)>0$ respectively. Fig. 1 shows that $\rho$ is not separable under the bipartition $f|gh$ for $0\leq x<0.179$ by using Theorem \ref{thm:1}, while according to  inequality (\ref{spe}), it cannot detect whether the $\rho$ is separable under the bipartition $f|gh$ or not. And $\rho$ is not fully separable for $0\leq x<0.547$ by using Theorem \ref{thm:2}, while according to inequality (\ref{spe}), $\rho$ is not fully separable for $0\leq x<0.427$. This shows that Theorem \ref{thm:1} and Theorem \ref{thm:2} detect more entangled states.

\begin{figure}
\centering
\includegraphics[width=8cm]{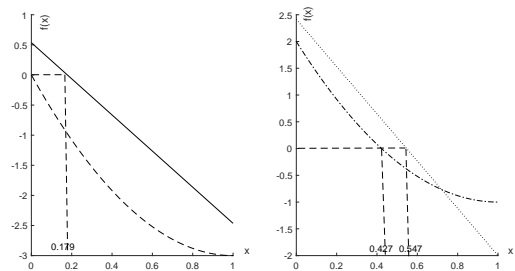}\\
\caption{The $f_1(x)$ from Theorem \ref{thm:1} (solid straight line), $f_2(x)$ from  inequality (\ref{spe}) (dashed curve line), $f_3(x)$ from Theorem \ref{thm:2} (dotted straight line) and $f_4(x)$ from  inequality (\ref{spe})(dash-dot curve line).}
\end{figure}

\section{Separability criteria for four-partite Quantum Systems}
We next consider the separability criteria for four-partite quantum states. Let $H_n^{d_n}$ $(n=1, 2, 3, 4)$ be $d_n$-dimensional Hilbert spaces. For $f=1, 2, 3, 4$, $i_f=1,\cdots,d_f^2-1$, let $\lambda_{i_f}^{(f)}$ denote the mutually orthogonal generators of the special unitary Lie algebra $\mathfrak{su}(d_f)$ under
a fixed bilinear form, and $I$ the $d_m\times d_m$, $m=1, 2, 3, 4$, identity matrix. A four-partite state $\rho\in H_1^{d_1}\otimes H_2^{d_2}\otimes H_3^{d_3}\otimes H_4^{d_4}$ can be written as follows:
\begin{eqnarray}
\rho&=&\frac{1}{d_1d_2d_3d_4}I\otimes I\otimes I\otimes I+\frac{1}{2d_gd_hd_e}\sum_{f=1}^4\sum_{i_f=1}^{d_f^2-1}t_{i_f}^f\lambda_{i_f}^{(f)}\otimes I\otimes I\otimes I+\cdots\nonumber\\
&+&\frac{1}{16}\sum_{i_1=1}^{d_1^2-1}\sum_{i_2=1}^{d_2^2-1}\sum_{i_3=1}^{d_3^2-1}\sum_{i_4=1}^{d_4^2-1}t_{i_1i_2i_3i_4}^{1234}\lambda_{i_1}^{(1)}\otimes \lambda_{i_2}^{(2)}\otimes \lambda_{i_3}^{(3)}\otimes \lambda_{i_4}^{(4)},
\end{eqnarray}
where $\lambda_{i_f}^{(f)}$ ($(f)$ indicates the position of $\lambda_{i_f}$ in the tensor product) stands for the operators on $H_{d_f}$, $I$ on the remaining appropriate factors, $t_{i_f}^f=tr(\rho\lambda_{i_f}^{(f)}\otimes I\otimes I\otimes I)$, $t_{i_fi_g}^{fg}=tr(\rho\lambda_{i_f}^{(f)}\otimes\lambda_{i_g}^{(g)}\otimes I\otimes I),  t_{i_fi_gi_h}^{fgh}=tr(\rho\lambda_{i_f}^{(f)}\otimes\lambda_{i_g}^{(g)}\otimes \lambda_{i_h}^{(h)}\otimes I)$, and $t_{i_1i_2i_3i_4}^{1234}=tr(\rho\lambda_{i_1}^{(1)}\otimes \lambda_{i_2}^{(2)}\otimes \lambda_{i_3}^{(3)}\otimes \lambda_{i_4}^{(4)})$. Let $T^{(f)}, T^{(fg)}, T^{(fgh)}, T^{(1234)}$ be the vectors (tensors) with entries $t_{i_f}^{f}, t_{i_fi_g}^{fg}, t_{i_fi_gi_h}^{fgh}, t_{i_1i_2i_3i_4}^{1234}$, respectively, where $1\leq f<g<h\leq4$, then we get $\|T^{(f)}\|^2=\sum_{i_f=1}^{d_f^2-1}(t_{i_f}^{f})^2$, $\|T^{(fg)}\|^2=\sum_{i_f=1}^{d_f^2-1}\sum_{i_g=1}^{d_g^2-1}(t_{i_fi_g}^{fg})^2$, $\|T^{(fgh)}\|^2=\sum_{i_f=1}^{d_f^2-1}\sum_{i_g=1}^{d_g^2-1}\sum_{i_h=1}^{d_h^2-1}(t_{i_fi_gi_h}^{fgh})^2$, and $\|T^{(1234)}\|^2=\sum_{i_1=1}^{d_1^2-1}\sum_{i_2=1}^{d_2^2-1}\\ \sum_{i_3=1}^{d_3^2-1}\sum_{i_4=1}^{d_4^2-1}(t_{i_1i_2i_3i_4}^{1234})^2$.
\begin{proposition}\label{prop:2}
 Let $\rho\in H_1^{d_1}\otimes\cdots\otimes H_4^{d_4}$ be a pure state, for $2\leq d_1\leq d_2\leq d_3\leq d_4$,
\begin{eqnarray}
\|T^{(1234)}\|^2\leq16(1-\frac{d_2d_3d_4+d_1d_3d_4+d_1d_2d_4+d_1d_2d_3-d_1-d_2-d_3+d_4}{2d_1d_2d_3d_4^2}).
\end{eqnarray}
\end{proposition}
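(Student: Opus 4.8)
The plan is to mirror the proof of Proposition~\ref{prop:1}, combining the purity of $\rho$ with the reduced-state purity equalities. First I would compute $\mathrm{tr}(\rho^2)$ directly from the Bloch expansion of $\rho$. Using the normalization $\mathrm{tr}(\lambda_{i}^{(f)}\lambda_{j}^{(f)})=2\delta_{ij}$ and the tracelessness of the generators, all cross terms vanish and each block contributes the square of its coefficient times the corresponding squared norm, so that
\begin{eqnarray}
\mathrm{tr}(\rho^2)&=&\frac{1}{d_1d_2d_3d_4}+\frac{1}{2}\sum_{f}\frac{\|T^{(f)}\|^2}{\prod_{k\neq f}d_k}+\frac{1}{4}\sum_{f<g}\frac{\|T^{(fg)}\|^2}{\prod_{k\neq f,g}d_k}\nonumber\\
&&+\frac{1}{8}\sum_{f<g<h}\frac{\|T^{(fgh)}\|^2}{d_e}+\frac{1}{16}\|T^{(1234)}\|^2=1,
\end{eqnarray}
where $e$ denotes the index complementary to $\{f,g,h\}$. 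Since $\rho$ is pure this equals $1$, and solving for the top tensor expresses $\|T^{(1234)}\|^2$ as $16(1-\frac{1}{d_1d_2d_3d_4})$ minus non-negatively weighted sums of the $1$-, $2$- and $3$-body norms.

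Second, I would invoke the purity equalities $\mathrm{tr}(\rho_S^2)=\mathrm{tr}(\rho_{\bar S}^2)$, valid for any bipartition $S|\bar S$ of a pure state. Expanding each marginal in Bloch form exactly as in Proposition~\ref{prop:1}, the four single-versus-triple bipartitions $\{e\}|\{f,g,h\}$ yield explicit formulas for the $3$-body norms $\|T^{(fgh)}\|^2$ in terms of $\|T^{(e)}\|^2$, the lower $1$- and $2$-body norms, and constants, while the three pair-versus-pair bipartitions $\{f,g\}|\{h,e\}$ relate the $2$-body norms among themselves and to $1$-body norms. The idea is to feed these relations into the expression for $\|T^{(1234)}\|^2$, and, before substituting, to relax the denominator $d_e$ in each $3$-body coefficient $-2/d_e$ up to the largest dimension $d_4$; because $\|T^{(fgh)}\|^2\geq0$ and $d_e\leq d_4$, this only enlarges the right-hand side and so preserves the inequality. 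Carrying out the substitutions and letting the constants telescope is designed to collapse the $1/(d_1d_2d_3d_4)$ term and reproduce exactly $16\bigl(1-\frac{d_2d_3d_4+d_1d_3d_4+d_1d_2d_4+d_1d_2d_3-d_1-d_2-d_3+d_4}{2d_1d_2d_3d_4^2}\bigr)$.

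The step I expect to be the main obstacle is the control of the $2$-body terms. In the tripartite Proposition~\ref{prop:1} the top sub-tensors are the $2$-body blocks, so once they are re-expressed through the purity equalities only $1$-body norms survive, and their coefficients are manifestly non-positive under $d_1\leq d_2\leq d_3$ and may simply be discarded. Here, by contrast, re-expressing the $3$-body norms \emph{reintroduces} the $2$-body norms with positive weights: a direct computation shows that after the relaxation the net coefficient of the pair $\{f,g\}$ is proportional to $d_{c_1}+d_{c_2}-d_4$, where $\{c_1,c_2\}$ is the complementary pair, which is genuinely positive for every pair not containing the index $4$. These terms therefore cannot be dropped, and must instead be bounded from above by a second application of the purity relations for the pair marginals (or by $\mathrm{tr}(\rho_{fg}^2)\leq1$) together with the ordering $d_1\leq d_2\leq d_3\leq d_4$. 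The delicate point is to coordinate this second layer of substitutions so that, once all contributions are collected, every residual $1$- and $2$-body coefficient is non-positive; verifying these signs, which rests entirely on the assumed ordering of the dimensions, is the technical heart of the argument, and discarding the residual terms then yields the stated bound.
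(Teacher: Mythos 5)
Your setup (the Bloch expansion of $tr(\rho^2)$ and the single-versus-triple purity equalities $tr(\rho_{i_1}^2)=tr(\rho_{i_2i_3i_4}^2)$) is the same as the paper's, but you invert the order of elimination, and that inversion is where the proof breaks. The paper relaxes the \emph{2-body} coefficients (using $d_{c_2}\leq d_4$) and then substitutes the summed purity relations to remove the 2-body norms; the price is that the 3-body norms reappear with coefficient $+\frac{1}{d_4}$, which is safely dominated by their original coefficients $-\frac{2}{d_e}\leq-\frac{2}{d_4}$, so every residual 1- and 3-body coefficient is non-positive, everything residual can be discarded, and the constant that survives is exactly the stated one. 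You instead relax the \emph{3-body} coefficients to $-\frac{2}{d_4}$ and substitute to remove the 3-body norms, which, as you yourself correctly compute, leaves each pair $\{f,g\}$ with net coefficient $\frac{4(d_{c_1}+d_{c_2}-d_4)}{d_{c_1}d_{c_2}d_4}$, strictly positive (equal to $\frac{4}{d_4^2}$) for every pair not containing subsystem $4$. These positive terms are not a deferrable technicality: no ordering assumption makes them non-positive, and none of the tools you name for the ``second layer'' can absorb them. The pair-versus-pair purity relations only relate the \emph{difference} $\|T^{(fg)}\|^2-\|T^{(c_1c_2)}\|^2$ to 1-body data, so they shuffle weight between complementary pairs without reducing the total; reusing the single-versus-triple relations simply undoes your first substitution (it is circular); and $tr(\rho_{fg}^2)\leq1$ is quantitatively too weak to recover the claimed constant.

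A concrete example shows the failure. Take $d_1=\cdots=d_4=2$ and $\rho=|\phi^+\rangle\langle\phi^+|_{12}\otimes|\phi^+\rangle\langle\phi^+|_{34}$ with $|\phi^+\rangle=\frac{1}{\sqrt{2}}(|00\rangle+|11\rangle)$. Then $\|T^{(12)}\|^2=\|T^{(34)}\|^2=3$, all 1- and 3-body norms vanish, and $\|T^{(1234)}\|^2=9$, which exactly saturates the bound of Proposition~\ref{prop:2}. After your substitution the constant term is $16-\frac{16(\Sigma-d_1-d_2-d_3)}{d_1d_2d_3d_4^2}=3$, where $\Sigma=d_2d_3d_4+d_1d_3d_4+d_1d_2d_4+d_1d_2d_3=32$, and every pair carries residual coefficient $+1$; the books balance only because the residual 2-body terms contribute $+1\cdot(3+3)=6$. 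Dropping them would assert the false bound $9\leq3$, while bounding all six pairs via $tr(\rho_{fg}^2)\leq1$, i.e.\ $\|T^{(fg)}\|^2\leq3$, gives $3+18=21$, far above the target $9$. So the sign verification you postpone as ``the technical heart'' cannot succeed in your elimination order; the argument only closes if you keep the 3-body norms and eliminate the 2-body layer, which is precisely what the paper's proof does.
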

\begin{proof}
Let $\rho_{i_1}$ and $\rho_{i_2i_3i_4}$ be the density matrices with respect to the subsystem $H_{d_{i_1}}$ $(i_1=1,\cdots,4)$ and $H_{d_{i_2}d_{i_3}d_{i_4}}$  $(1\leq i_2<i_3<i_4\leq4)$ respectively. For a pure state $\rho$, we have $tr(\rho^2)=1$ and $tr(\rho_{i_1}^2)=tr(\rho_{i_2i_3i_4}^2)$. Therefore
\begin{eqnarray}
\|T^{(1234)}\|^2&=&16(1-\frac{1}{d_1d_2d_3d_4})-8(\frac{1}{d_2d_3d_4}\|T^{(1)}\|^2+\cdots+\frac{1}{d_1d_2d_3}\|T^{(4)}\|^2)\nonumber\\
&-&4(\frac{1}{d_3d_4}\|T^{(12)}\|^2+\cdots+\frac{1}{d_1d_2}\|T^{(34)}\|^2)-2(\frac{1}{d_4}\|T^{(123)}\|^2+\cdots+\frac{1}{d_1}\|T^{(234)}\|^2)\nonumber\\
&\leq&16(1-\frac{1}{d_1d_2d_3d_4})-8(\frac{1}{d_2d_3d_4}\|T^{(1)}\|^2+\cdots+\frac{1}{d_1d_2d_3}\|T^{(4)}\|^2)\nonumber\\
&-&\frac{4}{d_4}(\frac{1}{d_3}\|T^{(12)}\|^2+\cdots+\frac{1}{d_1}\|T^{(34)}\|^2)-2(\frac{1}{d_4}\|T^{(123)}\|^2+\cdots+\frac{1}{d_1}\|T^{(234)}\|^2)\nonumber
\end{eqnarray}
\begin{eqnarray}
&\leq&16(1-\frac{1}{d_1d_2d_3d_4})-8(\frac{1}{d_2d_3d_4}\|T^{(1)}\|^2+\cdots+\frac{1}{d_1d_2d_3}\|T^{(4)}\|^2)\nonumber\\
&-&\frac{8}{d_4}[\frac{d_2d_3d_4+d_1d_3d_4+d_1d_2d_4+d_1d_2d_3-d_1-d_2-d_3-d_4}{d_1d_2d_3d_4}\nonumber\\
&+&\frac{1}{2}(\frac{d_2d_3d_4-d_2-d_3-d_4}{d_2d_3d_4}\|T^{(1)}\|^2+\cdots+\frac{d_1d_2d_3-d_1-d_2-d_3}{d_1d_2d_3}\|T^{(4)}\|^2)\nonumber\\
&-&\frac{1}{2}(\|T^{(123)}\|^2+\cdots+\|T^{(234)}\|^2)]-2(\frac{1}{d_4}\|T^{(123)}\|^2+\cdots+\frac{1}{d_1}\|T^{(234)}\|^2)\nonumber\\
&\leq&16(1-\frac{d_2d_3d_4+d_1d_3d_4+d_1d_2d_4+d_1d_2d_3-d_1-d_2-d_3+d_4}{2d_1d_2d_3d_4^2})\nonumber\\
&-&8(\frac{d_2d_3d_4-d_2-d_3+d_4}{2d_2d_3d_4^2}\|T^{(1)}\|^2
+\frac{d_1d_3d_4-d_1-d_3+d_4}{2d_1d_3d_4^2}\|T^{(2)}\|^2\nonumber\\
&+&\frac{d_1d_2d_4-d_1-d_2+d_4}{2d_1d_2d_4^2}\|T^{(3)}\|^2
+\frac{2d_4+d_1d_2d_3-d_1-d_2-d_3}{2d_1d_2d_3d_4}\|T^{(4)}\|^2)\nonumber\\
&-&2(\frac{1}{2d_4}\|T^{(123)}\|^2+\frac{2d_4-d_3}{2d_3d_4}\|T^{(124)}\|^2+\frac{2d_4-d_2}{2d_2d_4}\|T^{(134)}\|^2
+\frac{2d_4-d_1}{2d_1d_4}\|T^{(234)}\|^2)\nonumber\\
&\leq&16(1-\frac{d_2d_3d_4+d_1d_3d_4+d_1d_2d_4+d_1d_2d_3-d_1-d_2-d_3+d_4}{2d_1d_2d_3d_4^2}).
\end{eqnarray}
\begin{flushright}
$\square$
\end{flushright}
\end{proof}
{\bf Remark 4.} When $d_1=d_2=d_3=d_4=d$, we can obtain $\|T^{(1234)}\|\leq\frac{16(d^2-1)^2}{d^4}$. Thus, Proposition \ref{prop:2} generalizes Theorem 2 in \cite{lww}.

For a four-partite quantum state $\rho$ on $H_1^{d_1}\otimes H_2^{d_2}\otimes H_3^{d_3}\otimes H_4^{d_4}$, the bipartitions and tripartitions are the following: $f|ghe,fg|he,fgh|e$ and $f|g|he, fg|h|e, f|gh|e$ where $f\neq g\neq h\neq e=1,2,3,4$. We first consider the separability of $\rho$ under the bipartition $f|ghe$.
Define the matrix  $S(\rho_{f|ghe})$ by
\be
S(\rho_{f|ghe})=\left(
\begin{array}{cc}
            1&(T^{(g)})^t\\
            T^{(f)}&T^{(fg)}\\
\end{array}
\right ).
\ee
And by the inequality $\|T^{(j)}\|^2\leq \frac{2(d_j-1)}{d_j}(j=f, g)$\cite{vh}, we get the following separability criterion.
\begin{theorem}\label{thm:3}
If the state $\rho\in H_1^{d_1}\otimes H_2^{d_2}\otimes H_3^{d_3}\otimes H_4^{d_4}$ is separable under the bipartition $f|ghe$, then
\begin{eqnarray}
\|S(\rho_{f|ghe})\|_{tr}\leq\sqrt{\frac{(3d_f-2)(3d_g-2)}{d_fd_g}}.
\end{eqnarray}
\end{theorem}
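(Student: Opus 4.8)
The plan is to follow the same strategy as the proof of Theorem \ref{thm:1}, noting that $S(\rho_{f|ghe})$ involves only the one-body tensors $T^{(f)}, T^{(g)}$ and the two-body tensor $T^{(fg)}$, so that only the marginals on the two subsystems $f$ and $g$ come into play. First I would start from the separable decomposition $\rho_{f|ghe}=\sum_l p_l\,\rho_l^{(f)}\otimes\rho_l^{(ghe)}$ with $p_l>0$ and $\sum_l p_l=1$, write $\rho_l^{(f)}=\frac{1}{d_f}I_{d_f}+\frac12\sum_{i_f=1}^{d_f^2-1}t_{li_f}^f\lambda_{i_f}^{(f)}$, and extract from $\rho_l^{(ghe)}$ its reduced single-body Bloch vector on the $g$-factor. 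Writing $v_l^f$ and $v_l^g$ for the column vectors with entries $t_{li_f}^f$ and $t_{li_g}^g$, the key observation is that, because $f$ lies in a block disjoint from $g$ under this bipartition, the reduced state on $H_f\otimes H_g$ is a product in each term, so that $T^{(f)}=\sum_l p_l v_l^f$, $T^{(g)}=\sum_l p_l v_l^g$, and $T^{(fg)}=\sum_l p_l v_l^f(v_l^g)^t$.

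Substituting these identities, the matrix becomes a convex combination of rank-one outer products,
\[
S(\rho_{f|ghe})=\sum_l p_l\begin{pmatrix}1\\ v_l^f\end{pmatrix}\begin{pmatrix}1&(v_l^g)^t\end{pmatrix}.
\]
I would then apply the triangle inequality $\|A+B\|_{tr}\leq\|A\|_{tr}+\|B\|_{tr}$ together with the identity $\||a\rangle\langle b|\|_{tr}=\|a\|\,\|b\|$ for outer products of vectors to obtain
\[
\|S(\rho_{f|ghe})\|_{tr}\leq\sum_l p_l\sqrt{1+\|v_l^f\|^2}\,\sqrt{1+\|v_l^g\|^2}.
\]

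To close the estimate I would bound each factor using the one-body inequality $\|T^{(j)}\|^2\leq\frac{2(d_j-1)}{d_j}$ from \cite{vh}, applied to the single-subsystem state $\rho_l^{(f)}$ and to the $g$-marginal of $\rho_l^{(ghe)}$; this gives $1+\|v_l^f\|^2\leq\frac{3d_f-2}{d_f}$ and $1+\|v_l^g\|^2\leq\frac{3d_g-2}{d_g}$. Using $\sum_l p_l=1$ then yields exactly $\sqrt{(3d_f-2)(3d_g-2)/(d_fd_g)}$. I expect the argument to be essentially routine given Theorem \ref{thm:1}; the only step needing a word of justification is the factorization $T^{(fg)}=\sum_l p_l v_l^f(v_l^g)^t$, which holds precisely because $f$ and $g$ sit in opposite blocks of the partition $f|ghe$, so that $\rho_l^{(f)}$ and the $g$-marginal of $\rho_l^{(ghe)}$ are uncorrelated in every summand.
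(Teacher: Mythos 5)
Your proposal is correct and follows essentially the same route as the paper's own proof: the same separable decomposition, the same rank-one representation $S(\rho_{f|ghe})=\sum_l p_l\bigl(\begin{smallmatrix}1\\ v_l^f\end{smallmatrix}\bigr)\bigl(1\ (v_l^g)^t\bigr)$, the triangle inequality with the outer-product trace-norm identity, and the one-body bound from \cite{vh}. Your explicit remark on why $T^{(fg)}$ factorizes (because $f$ and $g$ lie in opposite blocks) is a justification the paper leaves implicit, but it is the same argument.
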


\begin{proof}A four-partite mixed state $\rho$ is separable under bipartition $f|ghe$ whenever it can be expressed as
\begin{eqnarray}
\rho_{f|ghe}=\sum_lp_l\rho_l^{(f)}\otimes\rho_l^{(ghe)},
\end{eqnarray}
where the probabilities $p_l>0, \sum_lp_l=1$. Let $\rho_l^{(f)}=\frac{1}{d_f}I_{d_f}+\frac{1}{2}\sum_{i_f=1}^{d_f^2-1} t_{li_f}^f\lambda_{i_f}^{(f)}, \rho_l^{(ghe)}=\frac{1}{d_gd_hd_e}I_{d_g}\otimes I_{d_h}\otimes I_{d_e}+\frac{1}{2}(\frac{1}{d_hd_e}\sum_{i_g=1}^{d_g^2-1} t_{li_g}^g\lambda_{i_g}^{(g)}\otimes I_{d_h}\otimes I_{d_e}+\cdots)+\frac{1}{4}(\frac{1}{d_e}\sum_{i_g=1}^{d_g^2-1}\sum_{i_h=1}^{d_h^2-1} t_{li_gi_h}^{gh}\lambda_{i_g}^{(g)}\otimes\lambda_{i_h}^{(h)}\otimes I_{d_e}+\cdots)+\frac{1}{8}\sum_{i_g=1}^{d_g^2-1}\sum_{i_h=1}^{d_h^2-1}\sum_{i_e=1}^{d_e^2-1} t_{li_gi_hi_e}^{ghe}\lambda_{i_g}^{(g)}\otimes\lambda_{i_h}^{(h)}\otimes \lambda_{i_e}^{(e)}$. Let $v_l^f$ and $v_l^g$ be the column vectors with entries $t_{li_f}^f$ and $t_{li_g}^g$ respectively. Therefore
\begin{eqnarray}
&&T^{(f)}=\sum_lp_lv_l^f, T^{(g)}=\sum_lp_lv_l^g, T^{(fg)}=\sum_lp_lv_l^f(v_l^g)^t,
\end{eqnarray}
and then it follows that the matrix $S(\rho_{f|ghe})$ can be written as
\begin{eqnarray}
S(\rho_{f|ghe})=\sum_lp_l
\left( \begin{array}{cc}
            1&(v_l^g)^t\\
            v_l^f&v_l^f(v_l^g)^t\\
           \end{array}
      \right )
=\sum_lp_l
\left( \begin{array}{cc}
             1\\
             v_l^f\\
            \end{array}
     \right )
\left( \begin{array}{ccccccccccccccc}
              1&(v_l^g)^t\\
            \end{array}
     \right ).
\end{eqnarray}
Thus,
\begin{eqnarray}
\|S(\rho_{f|ghe})\|_{tr} &\leq&\sum_lp_l\|
\left( \begin{array}{c}
             1\\
             v_l^f\\
            \end{array}
     \right )
\left( \begin{array}{cc}
              1&(v_l^g)^t\\
            \end{array}
     \right )\|_{tr}
=\sum_lp_l\|
\left( \begin{array}{c}
             1\\
             v_l^f\\
            \end{array}
     \right )\|
\left( \begin{array}{cc}
              1&(v_l^g)^t\\
            \end{array}
     \right )^t\|\nonumber\\
&=&\sum_lp_l\sqrt{1+\|v_l^f\|^2}\sqrt{1+\|v_l^g\|^2}\leq\sqrt{\frac{(3d_f-2)(3d_g-2)}{d_fd_g}}.
\end{eqnarray}
\begin{flushright}
$\square$
\end{flushright}
\end{proof}

We next consider the separability of $\rho$ under bipartition $fg|he$.
Denote the matrix 
\begin{eqnarray}
S(\rho_{fg|he})=
\left( \begin{array}{cc}
           1&(T^{(he)})^t \\
           T^{(fg)}&T^{(fghe)}\\
           \end{array}
      \right ).
\end{eqnarray}
And by Lemma \ref{lemma:1}, we have 
the following separability criterion.
\begin{theorem}\label{thm:4}
If the state $\rho\in H_1^{d_1}\otimes H_2^{d_2}\otimes H_3^{d_3}\otimes H_4^{d_4}$ is separable under bipartition $fg|he$, then
\begin{eqnarray}
\|S(\rho_{fg|he})\|_{tr}\leq\frac{\sqrt{(5d_g^2-4)(5d_e^2-4)}}{d_gd_e}.
\end{eqnarray}
\end{theorem}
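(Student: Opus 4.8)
The plan is to mirror the proof of Theorem \ref{thm:3}, replacing the one-body factors by two-body factors and invoking Lemma \ref{lemma:1} where that proof used the single-system bound. First I would write out an $fg|he$-separable decomposition $\rho_{fg|he}=\sum_l p_l\,\rho_l^{(fg)}\otimes\rho_l^{(he)}$ with $p_l>0$, $\sum_l p_l=1$, where $\rho_l^{(fg)}$ is a state on $H_f^{d_f}\otimes H_g^{d_g}$ and $\rho_l^{(he)}$ a state on $H_h^{d_h}\otimes H_e^{d_e}$. Since any separable state admits a decomposition with \emph{pure} product factors, I may take $\rho_l^{(fg)}$ and $\rho_l^{(he)}$ pure, which is precisely what will license Lemma \ref{lemma:1}. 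Expanding each factor in its Bloch form and letting $v_l^{fg}$, $v_l^{he}$ collect the two-body coefficients $t_{li_fi_g}^{fg}$, $t_{li_hi_e}^{he}$ into column vectors, I would read off
\[
T^{(fg)}=\sum_l p_l\,v_l^{fg},\qquad T^{(he)}=\sum_l p_l\,v_l^{he},\qquad T^{(fghe)}=\sum_l p_l\,v_l^{fg}(v_l^{he})^t,
\]
the last identity expressing that across the $fg|he$ cut the four-body correlation factorizes into the product of the two two-body correlations.

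With these substitutions the matrix collapses into a convex combination of rank-one blocks,
\[
S(\rho_{fg|he})=\sum_l p_l
\left(\begin{array}{c} 1\\ v_l^{fg}\end{array}\right)
\left(\begin{array}{cc} 1 & (v_l^{he})^t\end{array}\right),
\]
in exact analogy with Theorem \ref{thm:3}. Applying the triangle inequality $\|A+B\|_{tr}\le\|A\|_{tr}+\|B\|_{tr}$ together with the rank-one identity $\||a\rangle\langle b|\|_{tr}=\||a\rangle\|\,\||b\rangle\|$ then gives
\[
\|S(\rho_{fg|he})\|_{tr}\le\sum_l p_l\sqrt{1+\|v_l^{fg}\|^2}\,\sqrt{1+\|v_l^{he}\|^2}.
\]
It remains to bound each radical. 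Adopting the labelling $d_f\le d_g$ and $d_h\le d_e$, so that the larger dimension occupies the second slot of each pair, Lemma \ref{lemma:1} applied to the pure factors yields $\|v_l^{fg}\|^2\le\frac{4(d_g^2-1)}{d_g^2}$ and $\|v_l^{he}\|^2\le\frac{4(d_e^2-1)}{d_e^2}$, hence $1+\|v_l^{fg}\|^2\le\frac{5d_g^2-4}{d_g^2}$ and $1+\|v_l^{he}\|^2\le\frac{5d_e^2-4}{d_e^2}$. As these bounds are independent of $l$, summing against $\sum_l p_l=1$ produces the claimed estimate $\|S(\rho_{fg|he})\|_{tr}\le\frac{\sqrt{(5d_g^2-4)(5d_e^2-4)}}{d_g d_e}$.

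The one genuine subtlety, and the step I would treat most carefully, is the reduction to pure product factors: Lemma \ref{lemma:1} is stated for pure bipartite states, whereas a generic separable decomposition may present mixed factors $\rho_l^{(fg)}$. I would resolve this either by refining the decomposition into pure product states at the outset, or by noting that the Lemma \ref{lemma:1} estimate in fact survives for mixed bipartite states: already $tr\big((\rho_l^{(fg)})^2\big)\le1$ forces $\|v_l^{fg}\|^2\le 4\big(1-\tfrac{1}{d_fd_g}\big)\le\frac{4(d_g^2-1)}{d_g^2}$ whenever $d_f\le d_g$, so the required bound holds for mixed factors as well. Everything beyond this is the routine bookkeeping of the Bloch expansion.
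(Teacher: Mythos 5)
Your proof is correct and takes essentially the same route as the paper's: the same rank-one decomposition $S(\rho_{fg|he})=\sum_l p_l\bigl(\begin{smallmatrix}1\\ v_l^{fg}\end{smallmatrix}\bigr)\bigl(1\ (v_l^{he})^t\bigr)$, the same triangle-inequality plus trace-norm factorization step, and the same appeal to Lemma \ref{lemma:1} to bound $1+\|v_l^{fg}\|^2$ and $1+\|v_l^{he}\|^2$. Your closing observation is a genuine improvement in rigor rather than a different approach: the paper silently applies Lemma \ref{lemma:1} (stated for pure states) to the factors $\rho_l^{(fg)}$, $\rho_l^{(he)}$, whereas you justify this either by refining the decomposition to pure product factors or by noting that $tr\bigl((\rho_l^{(fg)})^2\bigr)\leq 1$ already gives $\|v_l^{fg}\|^2\leq 4\bigl(1-\tfrac{1}{d_fd_g}\bigr)\leq\tfrac{4(d_g^2-1)}{d_g^2}$ for $d_f\leq d_g$.
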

\begin{proof} A four-partite mixed state $\rho$ is separable under bipartition $fg|he$ whenever it can be expressed as
\begin{eqnarray}
\rho_{fg|he}=\sum_lp_l\rho_l^{(fg)}\otimes\rho_l^{(he)},
\end{eqnarray}
where the probabilities $p_l>0, \sum_lp_l=1$.

Let $\rho_l^{(fg)}=\frac{1}{d_fd_g}I_{d_f}\otimes I_{d_g}+\cdots+\frac{1}{4}\sum_{i_f=1}^{d_f^2-1}\sum_{i_g=1}^{d_g^2-1} t_{li_fi_g}^{fg}\lambda_{i_f}^{(f)}\otimes\lambda_{i_g}^{(g)}$, and $\rho_l^{(he)}=\frac{1}{d_hd_e}I_{d_h}\otimes I_{d_e}+\cdots+\frac{1}{4}\sum_{i_h=1}^{d_h^2-1}\sum_{i_e=1}^{d_e^2-1}t_{li_hi_e}^{he}\lambda_{i_h}^{(h)}\otimes\lambda_{i_e}^{(e)}$.
Let $v_l^{fg}$ and $v_l^{he}$ be the column vectors with entries $t_{li_fi_g}^{fg}$ and $t_{li_hi_e}^{he}$ respectively. Therefore
\begin{eqnarray}
&&T^{(fg)}=\sum_lp_lv_l^{fg}, T^{(he)}=\sum_lp_lv_l^{he}, T^{(fghe)}=\sum_lp_lv_l^{fg}(v_l^{he})^t.
\end{eqnarray}
Then the matrix matrix $S(\rho_{fg|he})$ can be written as
\begin{eqnarray}
S(\rho_{fg|he})&=&\sum_lp_l
\left( \begin{array}{cc}
          1&(v_l^{he})^t \\
          v_l^{fg}&v_l^{fg}(v_l^{he})^t\\
           \end{array}
      \right )
=\sum_lp_l
\left( \begin{array}{cc}
             1&(v_l^{fg})^t\\
            \end{array}
     \right )^t
\left( \begin{array}{cc}
             1&(v_l^{he})^t\\
            \end{array}
     \right ).
\end{eqnarray}
Thus,
\begin{eqnarray}
\|S(\rho_{fg|he})\|_{tr}&\leq&\sum_lp_l\|
\left( \begin{array}{cc}
             1&(v_l^{fg})^t\\
            \end{array}
     \right )^t
\left( \begin{array}{cc}
             1&(v_l^{he})^t\\
            \end{array}
     \right )\|_{tr}\nonumber\\
&=&\sum_lp_l\|
\left( \begin{array}{cc}
        1&(v_l^{fg})^t\\
            \end{array}
     \right )^t\|
\|\left( \begin{array}{cc}
          1&(v_l^{he})^t\\
            \end{array}
     \right )^t\|\nonumber\\
&=&\sum_lp_l\sqrt{1+\|v_l^{fg}\|^2}\sqrt{1+\|v_l^{he}\|^2}\nonumber\\
&\leq&\frac{\sqrt{(5d_g^2-4)(5d_e^2-4)}}{d_gd_e},
\end{eqnarray}
where we have used the triangular inequality of the trace norm.
\begin{flushright}
$\square$
\end{flushright}
\end{proof}

We next consider the separability of $\rho$ under tripartition $f|g|he$. We define the matrix $S(\rho_{f|g|he})$ by
\begin{eqnarray}
S(\rho_{f|g|he})=
\left( \begin{array}{cc}
           (T^{(h)})^t& T^{(gh)^t}  \\
           T^{(fh)}&T^{(fgh)}\\
           \end{array}
      \right ),
\end{eqnarray}
with the matrix, the inequality $\|T^{(j)}\|^2\leq \frac{2(d_j-1)}{d_j}(j=f, g, h)$\cite{vh} and similar methods of Theorem \ref{thm:2}, then we obtain the following separability criterion.
\begin{theorem}\label{thm:5}
If the state $\rho\in H_1^{d_1}\otimes H_2^{d_2}\otimes H_3^{d_3}\otimes H_4^{d_4}$ is separable under tripartition $f|g|he$, then
\begin{eqnarray}
\|S(\rho_{f|g|he})\|_{tr}\leq\frac{2(d_h-1)}{d_h}\sqrt{\frac{(3d_f-2)(3d_g-2)}{d_fd_g}}.
\end{eqnarray}
\end{theorem}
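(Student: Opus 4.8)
The plan is to follow the proof of Theorem \ref{thm:2} almost verbatim, the essential observation being that in the tripartition $f|g|he$ the subsystem $e$ never appears in the matrix $S(\rho_{f|g|he})$, so the composite block $he$ contributes only through its one-body marginal on $H_h$ and the computation collapses to the fully separable case already treated. Concretely, I would write a state separable under $f|g|he$ as $\rho_{f|g|he}=\sum_l p_l\,\rho_l^{(f)}\otimes\rho_l^{(g)}\otimes\rho_l^{(he)}$ with $p_l>0$ and $\sum_l p_l=1$, and expand $\rho_l^{(f)}$ and $\rho_l^{(g)}$ in Bloch form with local vectors $v_l^f$ and $v_l^g$. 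For the block $\rho_l^{(he)}$ only its one-body data on $H_h$ is needed: writing $v_l^h$ for the vector with entries $t_{li_h}^h=\mathrm{tr}(\rho_l^{(he)}\,\lambda_{i_h}^{(h)}\otimes I_{d_e})$, the partial trace $\mathrm{tr}_e(\rho_l^{(he)})=\frac{1}{d_h}I_{d_h}+\frac{1}{2}\sum_{i_h}t_{li_h}^h\lambda_{i_h}^{(h)}$ is a genuine density matrix on $H_h$, so $v_l^h$ obeys the same one-body bound $\|v_l^h\|^2\leq\frac{2(d_h-1)}{d_h}$ of \cite{vh} as any Bloch vector.

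Because $f$, $g$ and $h$ lie in three distinct blocks and no $\lambda^{(e)}$ enters $S$, the relevant traces factorize and give $T^{(h)}=\sum_l p_l v_l^h$, $T^{(fh)}=\sum_l p_l v_l^f(v_l^h)^t$, $T^{(gh)}=\sum_l p_l v_l^g\otimes v_l^h$ and $T^{(fgh)}=\sum_l p_l v_l^f(v_l^g\otimes v_l^h)^t$. These assemble into exactly the same rank-one-times-vector form obtained in Theorem \ref{thm:2},
\begin{eqnarray}
S(\rho_{f|g|he})=\sum_l p_l
\left(\begin{array}{c} 1\\ v_l^f \end{array}\right)
\left(\begin{array}{cc} 1 & (v_l^g)^t \end{array}\right)
\otimes (v_l^h)^t.
\end{eqnarray}

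From here the estimate is routine. Applying the triangle inequality $\|A+B\|_{tr}\leq\|A\|_{tr}+\|B\|_{tr}$, the multiplicativity $\|X\otimes Y\|_{tr}=\|X\|_{tr}\|Y\|_{tr}$ of the trace norm, and the rank-one identity $\||a\rangle\langle b|\|_{tr}=\||a\rangle\|\,\||b\rangle\|$, I obtain $\|S(\rho_{f|g|he})\|_{tr}\leq\sum_l p_l\,\|v_l^h\|\sqrt{1+\|v_l^f\|^2}\sqrt{1+\|v_l^g\|^2}$; substituting $1+\|v_l^f\|^2\leq\frac{3d_f-2}{d_f}$ and $1+\|v_l^g\|^2\leq\frac{3d_g-2}{d_g}$ together with the bound on $\|v_l^h\|$ then produces the claimed inequality exactly as in Theorem \ref{thm:2}. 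I expect no genuine obstacle here; the only points requiring care are verifying that $e$ truly drops out of every block of $S$ (so that the $H_h$-marginal argument and its norm bound apply) and that the three-body tensor $T^{(fgh)}$ is reshaped consistently with the tensor-product factorization, which makes the reduction to Theorem \ref{thm:2} exact.
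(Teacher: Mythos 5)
Your proof is correct and is exactly what the paper intends: the paper gives no separate proof of Theorem \ref{thm:5}, stating only that it follows from the matrix $S(\rho_{f|g|he})$, the one-body bound of \cite{vh}, and ``similar methods of Theorem \ref{thm:2}'', which is precisely your reduction. Your explicit observation that subsystem $e$ drops out so that only the marginal $\mathrm{tr}_e(\rho_l^{(he)})$ on $H_h$ enters (a genuine density matrix, hence obeying the same one-body bound) is the right way to make that reduction rigorous, and the rest matches the paper's Theorem \ref{thm:2} computation step for step.
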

{\bf Remark 5.} We may analyze the bipartition $fgh|e$ and tripartitions $fg|h|e$, $f|gh|e$ by using similar methods of Theorem \ref{thm:2} and Theorem \ref{thm:4} above, respectively. If the state $\rho\in H_1^{d_1}\otimes H_2^{d_2}\otimes H_3^{d_3}\otimes H_4^{d_4}$ is separable under bipartition $fgh|e$, then
$\|S_(\rho_{fgh|e})\|_{tr}\leq\sqrt{\frac{(3d_f-2)(3d_e-2)}{d_fd_e}}.$ If the state $\rho\in H_1^{d_1}\otimes H_2^{d_2}\otimes H_3^{d_3}\otimes H_4^{d_4}$ is separable under tripartition $fg|h|e$ or $f|gh|e$, then
$\|S(\rho_{fg|h|e})\|_{tr}\leq\frac{2(d_f-1)}{d_f}\sqrt{\frac{(3d_h-2)(3d_e-2)}{d_hd_e}}$ or $\|S(\rho_{f|gh|e})\|_{tr}\leq\frac{2(d_g-1)}{d_g}\sqrt{\frac{(3d_f-2)(3d_e-2)}{d_fd_e}}$, respectively.

We next consider the full separability of $\rho$.
Denote the matrix
\begin{eqnarray}
S(\rho_{f|g|h|e})=
\left( \begin{array}{cccc}
           (T^{(e)})^t& (T^{(he)})^t& (T^{(ge)})^t&(T^{(ghe)})^t \\
           T^{(fe)}&T^{(fhe)}&T^{(fge)}&T^{(fghe)}\\
           \end{array}
      \right ),
\end{eqnarray}
Using this matrix and the inequality for 1-body correlation tensors $\|T^{(j)}\|^2\leq \frac{2(d_j-1)}{d_j}(j=f, g, h, e)$\cite{vh}, we get the following separability criterion.
\begin{theorem}\label{thm:6}
If the state $\rho\in H_1^{d_1}\otimes H_2^{d_2}\otimes H_3^{d_3}\otimes H_4^{d_4}$ is fully separable, then
\begin{eqnarray}
\|S(\rho_{f|g|h|e})\|_{tr}\leq\frac{2(d_e-1)}{d_e}\sqrt{\frac{(3d_f-2)(3d_g-2)(3d_h-2)}{d_fd_gd_h}}.
\end{eqnarray}
\end{theorem}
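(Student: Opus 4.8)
The plan is to follow the template already established for Theorem \ref{thm:2}, now carrying four local Bloch vectors instead of three. First I would write the full-separability decomposition $\rho_{f|g|h|e}=\sum_l p_l\,\rho_l^{(f)}\otimes\rho_l^{(g)}\otimes\rho_l^{(h)}\otimes\rho_l^{(e)}$ with each $\rho_l^{(j)}$ a one-party state, expand every factor in its Bloch form, and read off the local Bloch vectors $v_l^f,v_l^g,v_l^h,v_l^e$ with entries $t_{li_f}^f,t_{li_g}^g,t_{li_h}^h,t_{li_e}^e$. Because each summand is a product state, every correlation tensor entering $S(\rho_{f|g|h|e})$ factorizes: $T^{(e)}=\sum_l p_l v_l^e$, $T^{(he)}=\sum_l p_l v_l^h\otimes v_l^e$, $T^{(ge)}=\sum_l p_l v_l^g\otimes v_l^e$, $T^{(ghe)}=\sum_l p_l v_l^g\otimes v_l^h\otimes v_l^e$, together with the bottom-row tensors $T^{(fe)},T^{(fhe)},T^{(fge)},T^{(fghe)}$ obtained by prepending the factor $v_l^f(\,\cdot\,)^t$.

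The key step is to recognize that, term by term, these factorizations assemble $S(\rho_{f|g|h|e})$ into a single Kronecker product. Using $(a\otimes b)^t=a^t\otimes b^t$, the top row of the $l$-th summand is $\big(1\ (v_l^g)^t\big)\otimes\big(1\ (v_l^h)^t\big)\otimes(v_l^e)^t$, whose four column-blocks reproduce exactly the ordering $e,\,he,\,ge,\,ghe$; appending the column $\big(1\ v_l^f\big)^t$ on the left yields
\begin{equation}
S(\rho_{f|g|h|e})=\sum_l p_l
\begin{pmatrix} 1\\ v_l^f\end{pmatrix}
\begin{pmatrix} 1 & (v_l^g)^t\end{pmatrix}
\otimes\begin{pmatrix} 1 & (v_l^h)^t\end{pmatrix}
\otimes (v_l^e)^t .
\end{equation}

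I would then take the trace norm, invoking the triangle inequality $\|A+B\|_{tr}\le\|A\|_{tr}+\|B\|_{tr}$, the multiplicativity $\|A\otimes B\|_{tr}=\|A\|_{tr}\|B\|_{tr}$, and the rank-one identity $\||a\rangle\langle b|\|_{tr}=\|a\|\,\|b\|$. This collapses the $l$-th summand to $\sqrt{1+\|v_l^f\|^2}\,\sqrt{1+\|v_l^g\|^2}\,\sqrt{1+\|v_l^h\|^2}\,\|v_l^e\|$, so that $\|S(\rho_{f|g|h|e})\|_{tr}\le\sum_l p_l\sqrt{1+\|v_l^f\|^2}\sqrt{1+\|v_l^g\|^2}\sqrt{1+\|v_l^h\|^2}\,\|v_l^e\|$. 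Finally I would insert the one-party bound $\|v_l^j\|^2\le\frac{2(d_j-1)}{d_j}$ from \cite{vh}, giving $\sqrt{1+\|v_l^j\|^2}\le\sqrt{(3d_j-2)/d_j}$ for $j=f,g,h$ and $\|v_l^e\|\le\sqrt{2(d_e-1)/d_e}\le\frac{2(d_e-1)}{d_e}$ (the last step being valid since $\frac{2(d_e-1)}{d_e}\ge1$ for $d_e\ge2$); using $\sum_l p_l=1$ then produces the stated right-hand side.

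The main obstacle I anticipate is bookkeeping rather than analytic: verifying that the four column-blocks of $S(\rho_{f|g|h|e})$, in the order $e,\,he,\,ge,\,ghe$, align precisely with the Kronecker expansion $\big(1\ (v_l^g)^t\big)\otimes\big(1\ (v_l^h)^t\big)\otimes(v_l^e)^t$, including the transpose ordering $(v_l^g\otimes v_l^h\otimes v_l^e)^t=(v_l^g)^t\otimes(v_l^h)^t\otimes(v_l^e)^t$. Once that factorization is confirmed, the norm estimate is the same routine as in Theorem \ref{thm:2} extended by one extra tensor slot, the only subtlety being the harmless slackening $\sqrt{2(d_e-1)/d_e}\le\frac{2(d_e-1)}{d_e}$ that matches the form of the claimed bound.
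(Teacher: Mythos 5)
Your proposal is correct and follows essentially the same route as the paper's own proof: the same product decomposition, the same factorization of the correlation tensors, the same rank-one Kronecker assembly $\bigl(\begin{smallmatrix}1\\ v_l^f\end{smallmatrix}\bigr)\bigl(1\ (v_l^g)^t\bigr)\otimes\bigl(1\ (v_l^h)^t\bigr)\otimes (v_l^e)^t$, and the same trace-norm estimate via the triangle inequality and the one-body bound from \cite{vh}. Your explicit justification of the slack $\sqrt{2(d_e-1)/d_e}\le 2(d_e-1)/d_e$ for $d_e\ge 2$ is a detail the paper passes over silently, but otherwise the two arguments coincide.
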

\begin{proof}
A four partite mixed state $\rho$ is fully separable whenever it can be expressed as
\begin{eqnarray}
\rho_{f|g|h|e}=\sum_lp_l\rho_l^{(f)}\otimes\rho_l^{(g)}\otimes\rho_l^{(h)}\otimes\rho_l^{(e)},
\end{eqnarray}
where the probabilities $p_l>0, \sum_lp_l=1$. Let $\rho_l^{(f)}=\frac{1}{d_f}I_{d_f}+\frac{1}{2}\sum_{i_f=1}^{d_f^2-1}t_{li_f}^f\lambda_{i_f}^{(f)}$, $\rho_l^{(g)}=\frac{1}{d_g}I_{d_g}+\frac{1}{2}\sum_{i_g=1}^{d_g^2-1}t_{li_g}^g\lambda_{i_g}^{(g)}$, $\rho_l^{(h)}=\frac{1}{d_h}I_{d_h}+\frac{1}{2}\sum_{i_h=1}^{d_h^2-1}t_{li_h}^h\lambda_{i_h}^{(h)}$, $\rho_l^{(e)}=\frac{1}{d_e}I_{d_e}+\frac{1}{2}\sum_{i_e=1}^{d_e^2-1}t_{li_e}^e\lambda_{i_e}^{(e)}$. Let $v_l^f, v_l^g, v_l^h$ and $v_l^e$ be the column vectors with entries $t_{li_f}^f, t_{li_g}^{g}, t_{li_h}^{h}$ and $t_{li_e}^{e}$ respectively. Therefore,
\begin{eqnarray}
&& T^{(e)}=\sum_lp_lv_l^e,\ T^{(fe)}=\sum_lp_lv_l^f(v_l^e)^t, \ T^{(ge)}=\sum_lp_lv_l^g\otimes v_l^e, \ T^{(he)}=\sum_lp_lv_l^h\otimes v_l^e, \nonumber\\ &&T^{(fge)}=\sum_lp_lv_l^f(v_l^g\otimes v_l^e)^t, \ T^{(ghe)}=\sum_lp_lv_l^g\otimes v_l^h\otimes v_l^e,\ T^{(fhe)}=\sum_lp_lv_l^f(v_l^h\otimes v_l^e)^t,\nonumber\\
&& T^{(fghe)}=\sum_lp_lv_l^f(v_l^g\otimes v_l^h\otimes v_l^e)^t,
\end{eqnarray}
where $t$ stands for transpose.
Then the matrix $S(\rho_{f|g|h|e})$ can be written as
\begin{eqnarray}
S(\rho_{f|g|h|e})&=&\sum_lp_l
\left( \begin{array}{ccccccccccccccc}
            (v_l^e)^t& (v_l^h\otimes v_l^e)^t& (v_l^g\otimes v_l^e)^t& (v_l^g\otimes v_l^h\otimes v_l^e)^t \\
            v_l^f(v_l^e)^t & v_l^f(v_l^h\otimes v_l^e)^t& v_l^f(v_l^g\otimes v_l^e)^t&v_l^f(v_l^g\otimes v_l^h\otimes v_l^e)^t\\
           \end{array}
      \right )
\nonumber\\
&=&\sum_lp_l
\left( \begin{array}{c}
              1\\
              v_l^f\\
            \end{array}
     \right )
\left( \begin{array}{cc}
              1&(v_l^g)^t\\
            \end{array}
     \right )\otimes
\left( \begin{array}{cc}
              1&(v_l^h)^t\\
            \end{array}
     \right )\otimes (v_l^e)^t.
\end{eqnarray}
Thus,
\begin{eqnarray}
\|S(\rho_{f|g|h|e})\|_{tr}&\leq&\sum_lp_l\|
\left( \begin{array}{c}
              1\\
              v_l^f\\
            \end{array}
     \right )
\left( \begin{array}{cc}
              1&(v_l^g)^t\\
            \end{array}
     \right )\otimes
\left( \begin{array}{cc}
              1&(v_l^h)^t\\
            \end{array}
     \right )\otimes (v_l^e)^t
\|_{tr}\nonumber\\
&=&\sum_lp_l\|
\left( \begin{array}{c}
                 1\\
               v_l^f\\
            \end{array}
     \right )\|
\|\left( \begin{array}{c}
                 1\\
               v_l^g\\
            \end{array}
     \right )\otimes
\left( \begin{array}{c}
                 1\\
               v_l^h\\
            \end{array}
     \right )\otimes v_l^e\|\nonumber\\
&=&\sum_lp_l\|
\left( \begin{array}{c}
                 1\\
               v_l^f\\
            \end{array}
     \right )\|
\|\left( \begin{array}{c}
                 1\\
               v_l^g\\
            \end{array}
     \right )\|
\|\left( \begin{array}{c}
                 1\\
               v_l^h\\
            \end{array}
     \right )\|
\|v_l^e\|\nonumber\\
&=&\sum_lp_l \|v_l^e\|\sqrt{1+\|v_l^f\|^2}\sqrt{1+\|v_l^g\|^2}\sqrt{1+\|v_l^h\|^2}\nonumber\\
&\leq&\frac{2(d_e-1)}{d_e}\sqrt{\frac{(3d_f-2)(3d_g-2)(3d_h-2)}{d_fd_gd_h}}.
\end{eqnarray}
\begin{flushright}
$\square$
\end{flushright}
\end{proof}
{\bf Remark 6.} From Theorem \ref{thm:3} to Theorem \ref{thm:6}, we have derived the upper bounds for 1-3, 3-1, 2-2, 1-1-2, 1-2-1, 2-1-1 and 1-1-1-1 separable quantum states. Thus, we can obtain a complete classification of four-partite quantum states with these bounds.

\textit{\textbf{Example 2.}} Consider the quantum state $\rho\in H_1^2\otimes H_2^2\otimes H_3^2\otimes H_4^2$,
\begin{eqnarray}
\rho=x|\psi\rangle\langle\psi|+\frac{1-x}{16}I_{16},
\end{eqnarray}
where $|\psi\rangle=\frac{1}{\sqrt{2}}(|0000\rangle+|1111\rangle)$ and $I_{16}$ stands for the $16\times 16$ identity matrix. By Theorem \ref{thm:4}, we have $f_1(x)=\|S(\rho_{fg|he})\|_{tr}-4=\sqrt{1+x^2}+2\sqrt{2}x+\frac{x-x^2}{1+x^2}-4$ and when $f_1(x)>0$, $\rho$ is not separable under bipartition $fg|he$. When $d=2$, from Ref. \cite{lww}, one has that $f_2(x)=\|T^{(1234)}\|^2-\frac{16}{d^4}(d^2-1)^2=9x^2-9$ and $\rho$ is not separable under bipartition $fg|he$ for $f_2(x)>0$. Fig.2 shows that $\rho$ is not separable under bipartition $fg|he$ for $0.915< x\leq 1$ by Theorem \ref{thm:4}, while using Theorem 3 in Ref. \cite{lww}, it cannot detect whether the $\rho$ is inseparable under bipartition $fg|he$. Thus, our method detects more entangled states than that of Ref. \cite{lww}.

\begin{figure}
\centering
\includegraphics[width=8cm]{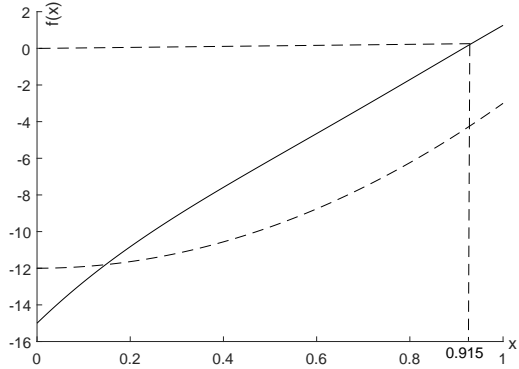}\\
\caption{The function $f_1(x)$ from Theorem \ref{thm:4} (solid line) and $f_2(x)$ from Ref. \cite{lww} (dashed line).}
\end{figure}

\textit{\textbf{Example 3.}} Consider the quantum state $\rho\in H_1^2\otimes H_2^2\otimes H_3^2\otimes H_4^2$,
\begin{eqnarray}
\rho=x|\varphi\rangle\langle\varphi|+\frac{1-x}{16}I_{16},
\end{eqnarray}
where $|\varphi\rangle=\frac{1}{2}(|0001\rangle+|0010\rangle+|0100\rangle+|1000\rangle)$ and $I_{16}$ stands for the $16\times16$ identity matrix.
By Theorem \ref{thm:3} and Theorem \ref{thm:5}, we have $f_1(x)=\|S_(\rho_{f|ghe})\|_{tr}-2=\frac{4+2x^2}{2\sqrt{4+x^2}}+x-2$ and $f_3(x)=\|S(\rho_{f|g|he})\|_{tr}-2=\frac{6+3\sqrt{2}}{4}x-2$ respectively. When $f_1(x)>0$ or $f_3(x)>0$, $\rho$ is not separable under bipartition $f|ghe$ or tripartition $f|g|he$ respectively. When $d=2$, we have $f_2(x)=\|T^{(1234)}\|^2-4=4x^2-4$ and $f_4(x)=\|T^{(1234)}\|^2-3=4x^2-3$ from Theorem 3 in Ref. \cite{lww}. And $\rho$ is not separable under bipartition $f|ghe$ or tripartition $f|g|he$ for $f_2(x)>0$ or $f_4(x)>0$ respectively. From Fig. 3, $\rho$ is not separable under bipartition $f|ghe$ for $0.783< x\leq 1$ by Theorem \ref{thm:3}, while using the Theorem 3 in Ref. \cite{lww}, it can not detect whether the $\rho$ is not separable under bipartition $f|ghe$. And $\rho$ is not separable under tripartition $f|g|he$ for $0.781< x\leq 1$ by Theorem \ref{thm:5}, while using the method in Ref. \cite{lww}, $\rho$ is not separable under tripartition $f|g|he$ for $0.866< x\leq 1$. Therefore Theorem \ref{thm:3} and Theorem \ref{thm:5} detect more entangled states than Theorem 3 in Ref. \cite{lww}.

\begin{figure}[!htb]
\centering
\includegraphics[width=8cm]{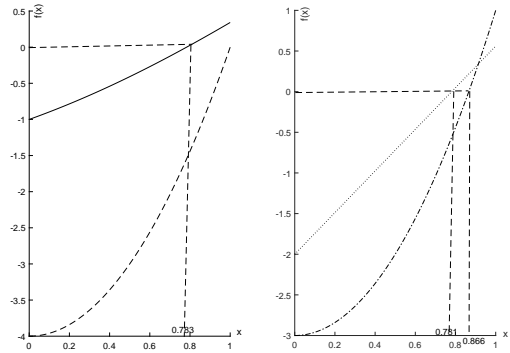}\\
\caption{The function $f_1(x)$ from Theorem \ref{thm:3} (solid curve line), $f_2(x)$ from Theorem 3 in Ref. \cite{lww} (dashed curve line), $f_3(x)$ from Theorem \ref{thm:5} (dotted straight line) and $f_4(x)$ from Theorem 3 in Ref. \cite{lww} (dash-dot curve line).}
\end{figure}

\section{Separability criteria for multipartite Quantum Systems}
We finally consider the separability criteria for $n$-partite quantum states. Let $H_i^{d}$ ($i=1, \cdots, n$) denote $d$-dimensional Hilbert spaces. Let $\lambda_{i_f}^{(f)}\,,f=1,\cdots, n$, $i_f=1,\cdots,d^2-1$ be the mutually orthogonal generators of the special unitary Lie algebra $\mathfrak{su}(d)$ under
a fixed bilinear form, and $I$ the $d\times d$ the identity matrix. A $n$-partite state $\rho\in H_1^d\otimes\ H_2^d\otimes\cdots\otimes H_n^d$ can be written as follows:
\begin{eqnarray}
\rho&=&\frac{1}{d^n}I\otimes\cdots\otimes I+\frac{1}{2d^{n-1}}\sum_{f=1}^n\sum_{i_1=1}^{d^2-1}t_{i_1}^{f}\lambda_{i_1}^{(f)}\otimes I\otimes\cdots\otimes I+\cdots\nonumber\\
&+&\frac{1}{2^n}\sum_{i_1,\cdots,i_n=1}^{d^2-1}t_{i_1,\cdots,i_n}^{1\cdots n}\lambda_{i_1}^{(1)}\otimes\lambda_{i_2}^{(2)}\otimes\cdots\otimes\lambda_{i_n}^{(n)},
\end{eqnarray}
where $\lambda_{i_f}^{(f)}$ ($(f)$ refers the position of $\lambda_{i_f}$ in the tensor product) stands for the operators with $\lambda_{i_f}$ on $H_{d_f}$, and $I$ on the remaining spaces, $t_{i_1}^{f}=tr(\rho\lambda_{i_1}^{(f)}\otimes I\otimes\cdots\otimes I),\cdots,$ $t_{i_1,\cdots,i_n}^{1\cdots n}=tr(\rho\lambda_{i_1}^{(1)}\otimes\lambda_{i_2}^{(2)}\otimes\cdots\otimes\lambda_{i_n}^{(n)})$. Let $T^{(f)},\cdots, T^{(1\cdots n)}$ be the vectors (tensors) with entries $t_{i_f}^{f}, \cdots, t_{i_1\cdots i_n}^{1\cdots n}$ respectively where $1\leq f\leq n$, then we get $\|T^{(f)}\|^2=\sum_{i_f=1}^{d^2-1}(t_{i_f}^{f})^2,\cdots, \|T^{(1\cdots n)}\|^2=\sum_{i_1,\cdots,i_n=1}^{d^2-1}(t_{i_1\cdots i_n}^{1\cdots n})^2$. Define further $A_1=\sum_{f=1}^n\|T^{(f)}\|^2, \cdots, A_n=\|T^{(1\cdots n)}\|^2$.
\begin{lemma}\label{lemma:2}
 Let $\rho\in H_1^d\otimes\ H_2^d\otimes\cdots\otimes H_n^d$ be a pure state,
\begin{eqnarray}
\|T^{(1\cdots n)}\|^2\leq(\frac{2}{d})^n\frac{(n-2)d^n-nd^{n-2}+2}{n-2}.
\end{eqnarray}
\end{lemma}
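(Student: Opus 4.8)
The plan is to generalize the purity computations behind Propositions \ref{prop:1} and \ref{prop:2}. Since $\rho$ is pure we have $\mathrm{tr}(\rho^2)=1$; expanding $\rho$ in the generator basis and using the orthogonality relations $\mathrm{tr}(\lambda_{i}^{(f)}\lambda_{j}^{(f)})=2\delta_{ij}$ together with $\mathrm{tr}(I)=d$, every cross term drops out and each $k$-body block contributes $A_k/(2^k d^{n-k})$, where $A_k$ is the sum of $\|T^{(S)}\|^2$ over the $k$-element subsets $S$ (so $A_0=1$). This yields the master identity
\begin{eqnarray}
\frac{1}{d^n}+\sum_{k=1}^{n}\frac{A_k}{2^k d^{n-k}}=1,
\end{eqnarray}
so that
\begin{eqnarray}
\|T^{(1\cdots n)}\|^2=A_n=2^n\Big(1-\frac{1}{d^n}\Big)-\sum_{k=1}^{n-1}\Big(\frac{2}{d}\Big)^{n-k}A_k.
\end{eqnarray}
Thus an upper bound on $A_n$ is equivalent to a \emph{lower} bound on the weighted sum of the lower-order tensor norms.

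To produce such a lower bound I would use the reduced-state purity matchings. For a pure global state and any subset $S$ one has $\mathrm{tr}(\rho_S^2)=\mathrm{tr}(\rho_{\bar S}^2)$; writing $\mathrm{tr}(\rho_S^2)=\sum_{T\subseteq S}\|T^{(T)}\|^2/(2^{|T|}d^{|S|-|T|})$ and summing over all $S$ of a fixed size, after counting the subsets containing a fixed $T$, gives the family of linear relations
\begin{eqnarray}
\sum_{j=0}^{k}\binom{n-j}{k-j}\frac{A_j}{2^j d^{k-j}}=\sum_{j=0}^{n-k}\binom{n-j}{n-k-j}\frac{A_j}{2^j d^{n-k-j}},\qquad 1\le k\le \Big\lfloor\frac{n}{2}\Big\rfloor .
\end{eqnarray}
The case $k=1$ is the exact analogue of the step $\mathrm{tr}(\rho_{i_1}^2)=\mathrm{tr}(\rho_{i_2i_3}^2)$ used in Proposition \ref{prop:1}: it reads $n/d+A_1/2=\sum_{j=0}^{n-1}(n-j)A_j/(2^j d^{n-1-j})$ and lets me express the $(n-1)$-body term $A_{n-1}$ through $A_1,\dots,A_{n-2}$. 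The higher relations ($k=2,3,\dots$) are then used to eliminate the remaining interior sums $A_2,\dots,A_{n-2}$ in turn.

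Substituting these relations into the expression for $A_n$ and collecting terms, the constant part should assemble to $2^n[\,1-n/((n-2)d^2)+2/((n-2)d^n)\,]$, which is exactly $(2/d)^n[(n-2)d^n-nd^{n-2}+2]/(n-2)$. The estimate is then closed by discarding the residual correlation-norm terms via $A_k\ge0$ and, where needed, the one-body inequality $\|T^{(f)}\|^2\le 2(d-1)/d$ of \cite{vh}. I expect the main obstacle to be precisely this last bookkeeping step: after the substitutions each surviving $A_k$ carries a coefficient that is a signed combination of binomial weights and powers of $1/d$, and one must verify that for every $d\ge 2$ these coefficients have the sign that makes dropping the terms an \emph{upper} (not lower) bound. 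This is where the single-party relation alone fails — it leaves the interior sums with positive coefficients — so the higher matchings are indispensable, and the delicate point is controlling their combined sign. As consistency checks, specializing $n=3$ and $n=4$ must reproduce Remark 1 and Remark 4, namely $\|T^{(123)}\|^2\le(8d^3-24d+16)/d^3$ and $\|T^{(1234)}\|^2\le 16(d^2-1)^2/d^4$.
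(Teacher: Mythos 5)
Your setup is exactly the paper's: the master purity identity $A_n=2^n(1-d^{-n})-\sum_{k=1}^{n-1}(2/d)^{n-k}A_k$ and the summed single-party matchings $\mathrm{tr}(\rho_{i_1}^2)=\mathrm{tr}(\rho_{i_2\cdots i_n}^2)$ are precisely the ingredients of the paper's proof of Lemma \ref{lemma:2}. But your argument is not complete, and the point where it stalls comes from a wrong elimination choice. You solve the $k=1$ relation for the $(n-1)$-body sum $A_{n-1}$; this indeed leaves $A_2,\dots,A_{n-2}$ with \emph{positive} coefficients (the coefficient of $A_k$ becomes $2^{n-k}(n-k-1)/d^{n-k}$), and from this you conclude that the higher matchings $k=2,3,\dots$ are indispensable --- a scheme you then do not carry out, explicitly flagging the sign control as "the main obstacle." That conclusion is false. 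The paper solves the \emph{same} summed $k=1$ relation for the two-body sum $A_2$,
\begin{equation}
\frac{n-2}{4d^{n-3}}A_2=\frac{n(d^{n-2}-1)}{d^{n-1}}+\frac{d^{n-2}-(n-1)}{2d^{n-2}}A_1-\sum_{k=3}^{n-1}\frac{n-k}{2^kd^{n-1-k}}A_k,
\end{equation}
and substitutes it into the master identity. The constant term then assembles to $(2/d)^n\,[(n-2)d^n-nd^{n-2}+2]/(n-2)$, and every surviving term has a manifestly non-positive coefficient:
\begin{equation}
A_n=\Big(\frac{2}{d}\Big)^n\frac{(n-2)d^n-nd^{n-2}+2}{n-2}-\frac{2^{n-1}(d^{n-2}-1)}{(n-2)d^{n-1}}A_1-\sum_{k=3}^{n-1}\frac{2^{n-k}(k-2)}{(n-2)d^{n-k}}A_k,
\end{equation}
so dropping the $A_k$ terms (using only $A_k\geq 0$) finishes the proof at once. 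Neither the higher matchings nor the one-body bound $\|T^{(f)}\|^2\leq 2(d-1)/d$ of \cite{vh} is needed.

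Beyond being unnecessary, your multi-level scheme is also underdetermined as described: you have only $\lfloor n/2\rfloor-1$ higher matchings available to eliminate the $n-3$ interior sums $A_2,\dots,A_{n-2}$, so already for $n\geq 5$ the elimination "in turn" cannot be executed, let alone with controlled signs. So the gap is concrete: the step you defer is the entire content of the lemma, and the route you propose to close it does not work; the fix is simply to re-target the single-party purity relation at $A_2$ rather than at $A_{n-1}$. Your consistency checks ($n=3$ recovering Remark 1, $n=4$ recovering Remark 4) are correct and would also hold for the corrected argument.
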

\begin{proof}
Let $\rho_{i_1}$ and $\rho_{i_2\cdots i_n}$ be the density matrices with respect to the subsystem $H_{d_{i_1}}, i_1=1,\cdots,n$, and $H_{d_{i_2}\cdots d_{i_n}}, 1\leq i_2<\cdots<i_n\leq n$. As for a pure state $\rho$ we have $tr(\rho^2)=1$ and $tr(\rho_{i_1}^2)=tr(\rho_{i_2\cdots i_n}^2)$. Therefore when $n=2$, we get $\rho_1^2=\rho_2^2$ and  $\|T^{(12)}\|\leq\frac{4(d^2-1)}{d^2}$ \cite{lww}. When $n>2$, we have
\begin{eqnarray}
A_n&=&2^n(1-\frac{1}{d^n})-\frac{2^{n-1}}{d^{n-1}}A_1-\frac{2^{n-2}}{d^{n-2}}A_2-\cdots-\frac{2}{d}A_{n-1}\nonumber\\
&=&2^n(1-\frac{1}{d^n})-\frac{2^{n-1}}{d^{n-1}}A_1-\frac{2^{n-2}}{d^{n-2}}[\frac{n(d^{n-2}-1)}{(n-2)d^n}+\frac{d^{n-2}-n+1}{2(n-2)d^{n-2}}A_1-\frac{n-3}{8(n-2)d^{n-3}}A_3-\cdots\nonumber\\
&-&\frac{1}{d(n-2)2^{n-1}}A_{n-1}]-\cdots-\frac{2}{d}A_{n-1}\nonumber
\end{eqnarray}
\begin{eqnarray}
&=&(\frac{2}{d})^n\frac{(n-2)d^n-nd^{n-2}+2}{n-2}-\frac{2^n(d^{n-2}-1)}{2(n-2)d^{n-1}}A_1-\frac{2^n}{8(n-2)d^{n-3}}A_3\nonumber\\
&-&\frac{2^{n+1}}{16(n-2)d^{n-4}}A_4-\cdots-\frac{2^n(n-3)}{d(n-2)2^{n-1}}A_{n-1}\nonumber\\
&\leq&(\frac{2}{d})^n\frac{(n-2)d^n-nd^{n-2}+2}{n-2}.
\end{eqnarray}
\begin{flushright}
$\square$
\end{flushright}
\end{proof}
{\bf Remark 7.} When the dimensions of each system for tripartite and four-partite quantum states are the same, Lemma \ref{lemma:2} specializes to Proposition \ref{prop:1} and Proposition \ref{prop:2}.

For the $n$-partite quantum state $\rho\in H_1^d\otimes\ H_2^d\otimes\cdots\otimes H_n^d$, we denote the general $k$-partite decompositions of $\rho$ as follows: $\{a^1\},\cdots,\{a^{k_1}\},\{c_1^1, c_2^1\},\cdots,\{c_1^{k_2}, c_2^{k_2}\},\cdots,\{e_1^{k_j},\cdots,e_j^{k_j}\}$, and $\sum_{m=1}^jk_m=k, \sum_{m=1}^jmk_m=n$. Denote the upper bounds of the $j$-body corralation tensors associated to partition
$(12\cdots j)$ by $w_{12\cdots j}$, namely, $\|T^{(1)}\|^2\leq\frac{2(d-1)}{d}=w_1$, $\|T^{(12)}\|^2\leq\frac{4(d^2-1)}{d^2}=w_{12}$, and $\|T^{(12\cdots j)}\|^2\leq(\frac{2}{d})^j\frac{(j-2)d^j-jd^{j-2}+2}{j-2}=w_{12\cdots j}$ (cf. Lemma \ref{lemma:2}).

\begin{theorem}\label{thm:7}
Let $\rho\in H_1^d\otimes\ H_2^d\otimes\cdots\otimes H_n^d$ be an $n$-partite $k$-separable quantum state. We have
\begin{eqnarray}
\|T^{(1\cdots n)}\|\leq(w_1)^{k_1}(w_{12})^{k_2}\cdots(w_{12\cdots j})^{k_j},
\end{eqnarray}
where $\sum_{m=1}^jk_m=k$, $\sum_{m=1}^jmk_m=n$.
\end{theorem}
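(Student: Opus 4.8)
The plan is to exploit the fact that, for a $k$-separable state, the top-level correlation tensor $T^{(1\cdots n)}$ factorizes across the blocks of the partition, and then to estimate its Frobenius norm block by block using the pure-state bound of Lemma~\ref{lemma:2}. First I would write the $k$-separable decomposition as
\begin{eqnarray}
\rho=\sum_lp_l\,\rho_l^{(A_1)}\otimes\cdots\otimes\rho_l^{(A_k)},\nonumber
\end{eqnarray}
where $A_1,\ldots,A_k$ is the partition of $\{1,\ldots,n\}$ into the $k$ blocks, $k_m$ of which have size $m$ (so $\sum_mk_m=k$ and $\sum_mmk_m=n$). By spectrally decomposing any mixed factor $\rho_l^{(A_m)}$ and absorbing the resulting weights into the label $l$, I may assume without loss of generality that every block state $\rho_l^{(A_m)}$ is pure; this only refines the convex combination and keeps it a mixture of products of pure states.

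Next I would record the factorization. Since the trace of a tensor product of operators is the product of the traces, every entry of the full tensor splits as
\begin{eqnarray}
t_{i_1\cdots i_n}^{1\cdots n}=\sum_lp_l\prod_{m=1}^k\,\mathrm{tr}\Big(\rho_l^{(A_m)}\bigotimes_{j\in A_m}\lambda_{i_j}^{(j)}\Big),\nonumber
\end{eqnarray}
so that, denoting by $v_l^{(A_m)}$ the top-level correlation tensor of the block state $\rho_l^{(A_m)}$, one has $T^{(1\cdots n)}=\sum_lp_l\,v_l^{(A_1)}\otimes\cdots\otimes v_l^{(A_k)}$. This is the key structural identity; it is the $n$-partite analogue of the rank-one factorizations used in the proofs of Theorems~\ref{thm:1}--\ref{thm:6}.

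Then I would estimate the norm. Using the triangle inequality together with the multiplicativity of the Frobenius norm under tensor products, $\|v^{(A_1)}\otimes\cdots\otimes v^{(A_k)}\|=\prod_m\|v^{(A_m)}\|$, I obtain
\begin{eqnarray}
\|T^{(1\cdots n)}\|\leq\sum_lp_l\prod_{m=1}^k\|v_l^{(A_m)}\|.\nonumber
\end{eqnarray}
For a block of size $m$ the factor $v_l^{(A_m)}$ is the full correlation tensor of a pure $m$-partite state on a tensor power of the $d$-dimensional space, so Lemma~\ref{lemma:2} (and its low-order cases $w_1,w_{12}$) gives $\|v_l^{(A_m)}\|^2\leq w_{12\cdots m}$. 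Grouping the $k$ factors according to their block sizes (there are $k_1$ of size $1$, $k_2$ of size $2$, and so on up to $k_j$ of size $j$) and using $\sum_lp_l=1$ then yields $\|T^{(1\cdots n)}\|^2\leq(w_1)^{k_1}(w_{12})^{k_2}\cdots(w_{12\cdots j})^{k_j}$, which is the claimed bound. As a consistency check, specializing to $n=3$ reproduces exactly the $1$-$2$ and $1$-$1$-$1$ bounds of inequality~(\ref{spe}).

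The hard part will be the pure-state reduction combined with the correct invocation of Lemma~\ref{lemma:2} for each block: one must verify that the restriction of the global generators $\lambda_{i_j}^{(j)}$ to a block furnishes precisely the generator system to which Lemma~\ref{lemma:2} applies, and that each block lives on a tensor power of the same $d$-dimensional space, so that the bound $w_{12\cdots m}$ applies verbatim. Once this is in place, the tensor-norm multiplicativity and the combinatorial bookkeeping of block sizes are routine.
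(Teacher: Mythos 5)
Your proposal is correct and follows essentially the same route as the paper's proof: factorize the correlation tensor over the blocks of the partition, bound each block's tensor by $w_1$, $w_{12}$, or Lemma~\ref{lemma:2}, and handle the convex mixture at the end. The only cosmetic difference is that you control the mixture via the triangle inequality and multiplicativity of the Frobenius norm, whereas the paper first treats pure product states (where the norm factorizes exactly) and then invokes $\|\sum_ip_iT_i\|^2\leq\sum_ip_i\|T_i\|^2$; both yield the bound on $\|T^{(1\cdots n)}\|^2$ that the paper actually proves and uses in its examples.
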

\begin{proof} Assume that $|\varphi\rangle\in H_1^d\otimes\ H_2^d\otimes\cdots\otimes H_n^d$ is a pure state, say
$|\varphi\rangle=|a^1\rangle\otimes \cdots\otimes |a^{k_1}\rangle\otimes \cdots\otimes |e_1^{k_j}\cdots e_j^{k_j}\rangle$, then we have
\begin{eqnarray}
t_{i_1\cdots i_n}^{1\cdots n}&&=tr(|\varphi\rangle\langle\varphi|\lambda_{i_1}^{(1)}\otimes\lambda_{i_2}^{(2)}\otimes\cdots\otimes\lambda_{i_n}^{(n)})\nonumber\\
&&=tr(|a^1\rangle\langle a^1|\lambda_{i_1}^{(1)})\cdots tr(|a^{k_1}\rangle\langle a^{k_1}|\lambda_{i_{k_1}}^{({k_1})})\cdots tr(|e_1^{k_j}\cdots e_j^{k_j}\rangle\langle e_1^{k_j}\cdots ej^{k_j}|\nonumber\\
&&\lambda_{i_{k_1+2k_2+\cdots+(j-1)k_{j-1}+1}}^{({k_1+2k_2+\cdots+(j-1)k_{j-1}+1})}\otimes \cdots\otimes\lambda_{i_{k_1+2k_2+\cdots+jk_j}}^{({k_1+2k_2+\cdots+jk_j})})\nonumber\\
&&=t_{i_1}^1\cdots t_{i_{k_1}}^{k_1}\cdots t_{i_{k_1+2k_2+\cdots+(j-1)k_{j-1}+1},\cdots, i_{k_1+2k_2+\cdots+jk_j}}^{{k_1+2k_2+\cdots+(j-1)k_{j-1}+1},\cdots,{k_1+2k_2+\cdots+jk_j}}.
\end{eqnarray}
Thus,
\begin{eqnarray}
\|T^{(1\cdots n)}\|^2&=&\sum_{i_1,\cdots,i_n=1}^{d^2-1}(t_{i_1\cdots i_n}^{1\cdots n})^2\nonumber\\
&=&\|T^{(1)}\|^2\cdots\|T^{(k_1)}\|^2\cdots\|T^{({k_1+2k_2+\cdots+(j-1)k_{j-1}+1},\cdots,{k_1+2k_2+\cdots+jk_j})}\|\nonumber\\
&\leq&(w_1)^{k_1}(w_{12})^{k_2}\cdots(w_{12\cdots j})^{k_j}.
\end{eqnarray}
In general for any mixed state $\rho\in H_1^d\otimes\ H_2^d\otimes\cdots\otimes H_n^d$ with ensemble representation $\rho=\sum_ip_i|\varphi_i\rangle\langle\varphi_i|$ where $\sum_ip_i=1$, we derive that
\begin{eqnarray}
\|T^{(1\cdots n)}(\rho)\|^2=\|\sum_ip_iT^{(1\cdots n)}(|\varphi_i\rangle)\|^2&\leq& \sum_ip_i\|T^{(1\cdots n)}(|\varphi_i\rangle)\|^2\nonumber\\
&\leq&(w_1)^{k_1}(w_{12})^{k_2}\cdots(w_{12\cdots j})^{k_j}.
\end{eqnarray}
\begin{flushright}
$\square$
\end{flushright}
\end{proof}
{\bf Remark 8.} When $\rho\in H_1^d\otimes H_2^d\otimes H_3^d\otimes H_4^d$ be a four-partite quantum state. Explicitly by Theorem \ref{thm:7}, we have that
\be
\|T^{(1234)}\|\leq\left\{
             \begin{array}{lr}
           \frac{16}{d^4}(d-1)(d^3-3d+2), \ if \ \rho \ is \ 1-3 \ separable;\\
           \frac{16}{d^4}(d^2-1)^2, \ if \ \rho \ is \ 2-2 \ separable;\\
           \frac{16}{d^4}(d^2-1)(d-1)^2, \ if \ \rho \ is \ 1-1-2 \ separable;\\
           \frac{16}{d^4}(d-1)^4, \ if \  \rho \ is \ 1-1-1-1 \ separable.
             \end{array}
            \right.
\ee
This shows that Theorem 3 in \cite{lww} is a special case of Theorem \ref{thm:7}.

{\it Example} 4. Consider the quantum state $\rho\in H_1^d\otimes\cdots\otimes H_5^d$,
\begin{eqnarray}
\rho=x|\psi\rangle\langle\psi|+\frac{1-x}{32}I_{32},
\end{eqnarray}
where $|\psi\rangle=\frac{1}{\sqrt{2}}(|00000\rangle+|11111\rangle)$ and $I_{32}$ stands for the $32\times32$ identity matrix. Because of $\|T^{(12345)}\|^2=\sum_{i_1,i_2,i_3,i_4,i_5=1}^{d^2-1}t_{i_1i_2i_3i_4i_5}^{12345}$, we see that $\|T^{(12345)}\|^2=16x^2$. By Theorem \ref{thm:7}
\be
\|T^{(12345)}\|^2\leq\left\{
             \begin{array}{lr}
               \frac{32}{d^5}(d-1)(d^2-1)^2,\ \rho\ is \ 1-4 \ or \ 1-2-2 \ separable;\\
               \frac{32}{d^5}(d^2-1)(d^3-3d+2), \  \rho \ is \ 2-3 \ separable;\\
               \frac{32}{d^5}(d-1)^2(d^3-3d+2), \ \rho \ is \ 1-1-3 \ separable;\\
               \frac{32}{d^5}(d-1)^5,\ \rho \ is \ 1-1-1-1-1 \ separable.
             \end{array}
\right.
\ee
Thus, for $\frac{3}{4}<x\leq\frac{\sqrt{3}}{2}$, $\rho$ will not be 1-4 or 1-2-2 separable. For $\frac{\sqrt{3}}{2}<x\leq1$ and $\frac{1}{2}<x\leq\frac{3}{4}$, $\rho$ will not be 2-3 separable or 1-1-3 separable respectively. For $\frac{1}{4}<x\leq\frac{1}{2}$, $\rho$ will be not 1-1-1-1-1 separable.
\section{ Conclusion  }
We have studied necessary conditions of separability for multipartite quantum states based on correlation tensors, and we have derived the upper bound for the norms of correlation tensors and the separability criterion under any partition by constructing a matrix for tripartite and four-partite quantum state. We also obtain the separability criteria under various partitions by using matrix method. Furthermore, we have studied the norm of correlation tensors for $\rho\in H_1^d\otimes\ H_2^d\otimes\cdots\otimes H_n^d$ to obtain necessary conditions of separability under any k-partition. Several examples are given under different partitions of the quantum state and our criteria are seen to be able to judge more general situations. In particular, explicit examples are given to show for tripartite and four-partite states.

\textbf {Acknowledgements}
This work is supported by the National Natural Science Foundation of China under grant Nos. 11101017, 11531004, 11726016 and 11675113,
and Simons Foundation under grant No. 523868, Key Project of Beijing Municipal Commission of Education (KZ201810028042).

\end{document}